\documentclass[11pt,a4paper]{article}
\usepackage[affil-it]{authblk}
\usepackage{fullpage}
\usepackage{graphicx}
\usepackage{amssymb,amsmath,amsthm}
\usepackage{amsrefs}
\usepackage{epstopdf}
\usepackage[colorlinks=true, citecolor=blue]{hyperref}
\usepackage{color}
\usepackage{array}
\usepackage[mathscr]{euscript}
\usepackage{caption, subcaption}

\newtheorem{lem}{Lemma}

\newtheorem{thm}{Theorem}
\newtheorem*{thm*}{Theorem}
\newtheorem{prop}{Proposition}
\newtheorem{conj}{Conjecture}

\title{A solvable non-directed model of polymer adsorption}
\author{Nicholas R Beaton}
\affil{Department of Mathematics and Statistics, University of Saskatchewan, Saskatoon, Canada}
\author{Gerasim K. Iliev}
\affil{Department of Chemistry, University of Toronto, Toronto, Canada}
\date{}

\def\e{e}
\def\alphac{\alpha_{\rm c}}
\def\ac{a_{\rm c}}

\def\aSl{a_{c}^{{\ell}}}
\def\aSt{a_{c}^{{t}}}

\def\tSl{z_{}^{{\ell}}}
\def\tSlone{z_{1}^{{\ell}}}
\def\tSltwo{z_{2}^{{\ell}}}

\def\tSt{z_{}^{{ t}}}
\def\tStone{z_{1}^{{ t}}}
\def\tSttwo{z_{2}^{{ t}}}

\makeatletter
\def\blfootnote{\xdef\@thefnmark{}\@footnotetext}
\makeatother

\begin{document}
\maketitle

\begin{abstract}
Prudent walks are self-avoiding walks which cannot step towards an already occupied vertex. 
We introduce a new model of \emph{adsorbing} prudent walks on the square lattice, 
which start on an impenetrable surface and accrue a fugacity $a$ with each step along the surface. 
These are different to other exactly solved models of polymer adsorption, like Dyck paths, 
Motzkin paths and partially-directed walks, in that they are not trivially directed - 
they are able to step in all lattice directions. We calculate the generating functions, free energies and 
surface densities for this model and observe a first-order adsorption transition at the critical 
value of the surface interaction.\blfootnote{Email: {\tt n.beaton@usask.ca}, {\tt giliev@chem.utoronto.ca}}
\end{abstract}

\section{Introduction}
Lattice walk models have long been considered as a canonical model for long-chain 
polymers in solution~\cite{Orr1947Statistical}. {One popular lattice walk model in the polymer community 
is that of self-avoiding walks (SAWs).} The standard model associates a weight (or fugacity) $z$ 
with each step, or monomer. If $c_n$ is the number of $n$-step walks (equivalent up to translation), 
the length generating function is
\[F(z) = \sum_{n\geq0}c_n z^n.\]
It has long been known that the limit
\[\log\mu = \lim_{n\to\infty}\frac1n \log c_n\]
exists, where the \emph{growth constant} $\mu$ depends on the lattice being considered.  
For most lattices in 2 or higher dimensions, only numerical estimates for $\mu$ are 
known~\cite{Jensen1999Selfavoiding, Jensen2004Selfavoiding}. 
A notable exception is the 2-dimensional honeycomb lattice, for which it has recently been 
proven by Smirnov and Duminil-Copin~\cite{DuminilCopin2012Connective} that $\mu=\sqrt{2+\sqrt{2}}$. 

To model the behaviour of a polymer interacting with an 
impenetrable surface, we consider walks restricted to a half-space. Although this restriction
limits the number of walk configurations, it has no effect on the exponential growth of the
number of walks and hence the growth constant~\cite{Whittington1975Selfavoiding}. 
To model the monomer-surface interactions of an adsorbing polymer, we associate a 
fugacity $\e^{\alpha}$ with every edge of the walk which lies in the boundary
of the half-space.\footnote{{One can instead associate the fugacity with \emph{vertices} in the surface, 
to obtain a slightly different model.}}  If we define $c_n^+(\nu)$ as the number of
$n$-step half-space walks beginning on the surface with $\nu$ edges along the 
surface, then the partition function is given by
\[Z_n(\alpha) = \sum_{\nu=0}^n c_n^+(\nu)\e^{\nu\alpha},\]
where $\alpha= -\epsilon/k_B T$.  Here $\epsilon$ is the energy associated with a surface 
visit, $T$ is the absolute temperature and $k_B$ is Boltzmann's constant.
For sufficiently large values of $\alpha$, configurations with many visits to the surface dominate
the partition function and the walk is said to be in an \emph{adsorbed} state;  
otherwise, the loss in configurational entropy dominates, and the walk is repelled by 
the surface and said to be in a \emph{desorbed} state. 

The free energy of the system is 
\[f(\alpha) = \lim_{n\to\infty}\frac1n\log Z_n(\alpha),\]
which has been shown~\cite{Hammersley1982Selfavoiding} to exist for all $\alpha < \infty$. When $\alpha<0$ the 
free energy is independent of $\alpha$ and given by $\log\mu$~\cite{Whittington1975Selfavoiding}. 
For $\alpha\geq0$, it has been shown that in two dimensions
\[f(\alpha) \geq \max\{\log\mu,\alpha\}.\]
This implies the existence of a critical value $\alpha_{\rm c}$ with 
$0 < \alpha_{\rm c} < \log\mu$ where the free energy 
$f(\alpha)$ is non-analytic. We identify this point of non-analyticity as the location of the 
\emph{adsorption phase transition:} for $\alpha<\alpha_{\rm c}$, the polymer is desorbed, 
while for $\alpha>\alpha_{\rm c}$ the polymer is adsorbed.

Closely related to the free energy is the mean density of surface visits; 
in the limit of infinitely long polymers this is given by
\begin{equation}\label{eqn:density}
\rho(\alpha) = \frac{\partial f(\alpha)}{\partial\alpha}.
\end{equation}
This variable acts as an order parameter for the system, and signals the 
onset of a phase transition.
If $\alpha<\alpha_{\rm c}$ then $\rho(\alpha)=0$, while $\rho(\alpha)>0$ for 
$\alpha>\alpha_{\rm c}$. If the adsorption transition is \emph{second-order} then 
$\rho$ is continuous for all $\alpha$, while a \emph{first-order} transition is
manifested by a jump discontinuity in $\rho(\alpha)$ at $\alpha_{\rm c}$.

Instead of working directly with the partition functions, $Z_n$, 
we will consider the generating function $Z(z,\alpha) = \sum_n Z_n(\alpha) z^n$. 
A central tenet of analytic combinatorics~\cite{Flajolet2009Analytic} is that the 
\emph{dominant singularities} (the points of non-analyticity closest to the origin) 
of a generating function determine the asymptotic behaviour of its coefficients. 
Specifically, if for a given $\alpha$ the dominant singularity of $Z(z,\alpha)$ 
occurs at $z = z_{\rm c}(\alpha)$, then 
\begin{equation}\label{eqn:fe_domsing}
f(\alpha) = \lim_{n\to\infty} \frac1n \log Z_n(\alpha) = -\log z_{\rm c}(\alpha).
\end{equation}
For many models (including the one considered in this paper) it is much easier to determine 
the generating function than to compute the individual partition functions $Z_n$ for 
each $n$. But since~\eqref{eqn:fe_domsing} enables us to determine the free energy 
$f(\alpha)$ of a model directly from its generating function, we need never consider 
the $Z_n$ anyway.

For ease of notation we will frequently use the surface fugacity $a={\rm e}^{\alpha}$, 
and consider the free energy, etc. as functions of $a$. 
The above arguments are still valid, with the critical fugacity occuring at 
$\ac = e^{\alphac}$.

\subsection{Exactly solved models}\label{ssec:exactly_solved}
At the present time, a closed form expression for the generating or partition function of SAWs is still unattainable. 
Numerical experiments, using series analysis~\cite{Clisby2012New, Guttmann2014Pulling} or Monte Carlo simulations~\cite{Clisby_2013, vanRensburg2004Multiple}, can be 
used to estimate non-rigorous values of $\mu$ and $\ac$, for example.

Exactly solvable models can be obtained by placing a restriction on the walk models used to
study the polymer system. Possibly the simplest restriction to use is \emph{directedness -} 
forbidding steps in certain lattice directions. Below, we summarize the three most 
commonly-used directed models.
\begin{itemize}
\item {\bf Dyck\slash ballot paths.} \cite{vanRensburg1999Adsorbing} These are generated on the ($45^{\circ}$ rotated) 
square lattice by forbidding NW and SW steps. Since there are no steps parallel to the surface, only 
vertex-weighted models are possible. In the desorbed phase, these walks have growth constant $\mu=2$. 
The adsorption phase transition occurs at $\ac=2$.
\item {\bf Motzkin paths.} \cite{vanRensburg2010Adsorbing} These are generated on the triangular lattice by forbidding 
NW, W and SW steps. The growth constant is $\mu=3$, and the phase transition occurs at $\ac = 2$ for 
edge weights and $\ac=3/2$ for vertex weights.
\item {\bf Partially-directed walks.} \cite{Privman1989Directed} These are generated on the square lattice by 
forbidding W steps. The growth constant here is $\mu=1+\sqrt{2}$, and the phase transition occurs at 
$\ac = (2+\sqrt{2})/2$ for edge weights and $\ac = (1+\sqrt{2})(\sqrt{5}-1)/2$ for vertex weights.
\end{itemize}
\begin{figure}
\centering
\begin{subfigure}[b]{0.3\textwidth}
\begin{picture}(0,0)
\put(0,1.4){\includegraphics[width=\textwidth]{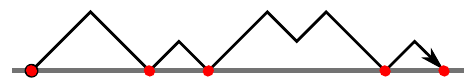}}
\end{picture}
\caption{}
\end{subfigure}
\begin{subfigure}[b]{0.3\textwidth}
\includegraphics[width=\textwidth]{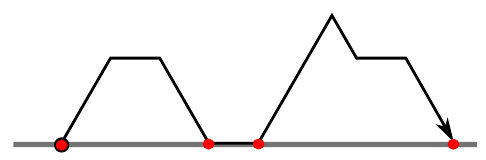}
\caption{}
\end{subfigure}
\begin{subfigure}[b]{0.3\textwidth}
\begin{picture}(0,0)
\put(0,0.5){\includegraphics[width=\textwidth]{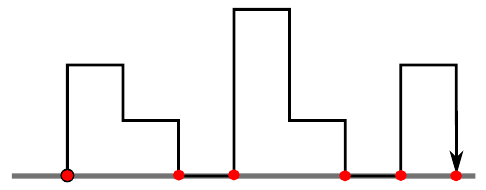}}
\end{picture}
\caption{}
\end{subfigure}
\caption{Examples of directed polymer adsorption models, with vertex weights indicated: 
(a) a Dyck path, (b) a Motzkin path and (c) a partially-directed walk.}
\label{fig:directed_examples}
\end{figure}

All of these directed models undergo second-order adsorption transitions at their critical values of
the surface fugacity. These models are relatively easy to solve, and can easily be adapted to model other 
polymeric objects and phenomena, such as copolymers and polymer 
collapse~\cite{Madras2003Localization, Brak1992Collapse, Owczarek2007Exact}. 
The directedness restrictions are, however, very strong, and result in walks which have considerably 
less freedom than general SAWs.

Here we introduce a model of polymer adsorption on the square lattices which does not 
require a directedness restriction -- the walks are able to step in all lattice directions. 
Instead of being directed, the walks must be \emph{prudent} -- they can never take a step towards a previously-visited 
lattice point.
The model of partially-directed walks can be viewed as a subset of this prudent walk model. 

Using the prudence restriction we recursively generate the walks by using a set of generating functions 
that keep track of certain geometric measurements. These recurrences can be written in the form of 
functional equations, which can be solved via the \emph{kernel method} and some of its generalisations. 

We analyze the generating functions to determine the free energy $f(\alpha)$ and the surface density $\rho(\alpha)$
for prudent walks.  These calculations are carried out for walks where either one or both endpoints of the walk 
lie in the surface and we refer to these two subsets as \emph{tails} and \emph{loops}, respectively.
We find that both prudent tails and loops undergo a \emph{first-order} adsorption transition at
$\aSt=2$ and $\aSl\approx1.82476$, respectively.

\subsection{Prudent walks}
Prudent walks on the square lattice were introduced by Pr\'ea \cite{Prea1997Exterior} and later studied by Duchi 
\cite{Duchi2005Some} and Bousquet-M\'elou \cite{BousquetMelou2010Families}. These walks have the useful property 
that their endpoint always lies on the boundary of their \emph{bounding box}  -- the smallest 
lattice rectangle which contains the walk. Using this property,  Duchi defined four classes of 
increasing generality:
\begin{itemize}
\item \emph{1-sided} prudent walks must always end on the E side of their bounding box,
\item \emph{2-sided} prudent walks must always end on the N or E sides,
\item \emph{3-sided} prudent walks must always end on the N, E or W sides (and may not step between 
the SE and SW corners of the box when the box has width one), and
\item \emph{4-sided} (or \emph{general}) prudent walks may end anywhere on their box.
\end{itemize}

\begin{figure}
\centering
\begin{subfigure}[b]{0.3\textwidth}
\includegraphics[width=\textwidth]{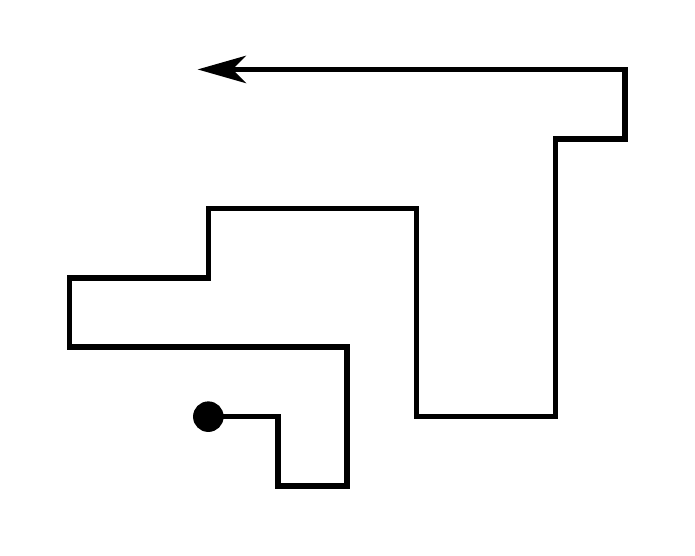}
\caption{}
\end{subfigure}
\begin{subfigure}[b]{0.3\textwidth}
\includegraphics[width=\textwidth]{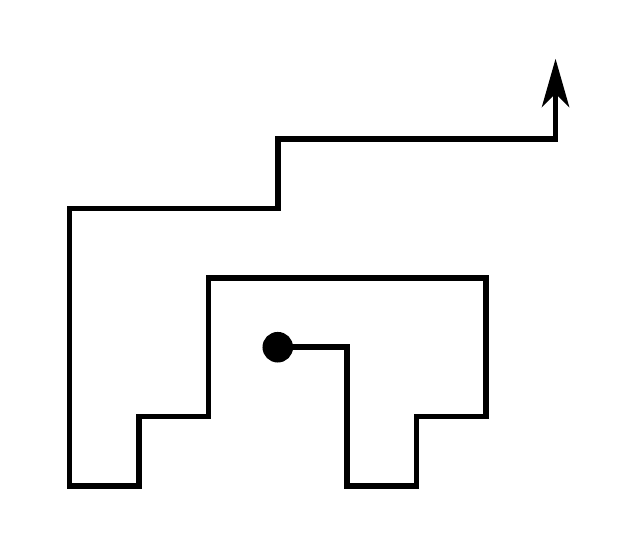}
\caption{}
\end{subfigure}
\begin{subfigure}[b]{0.3\textwidth}
\includegraphics[width=\textwidth]{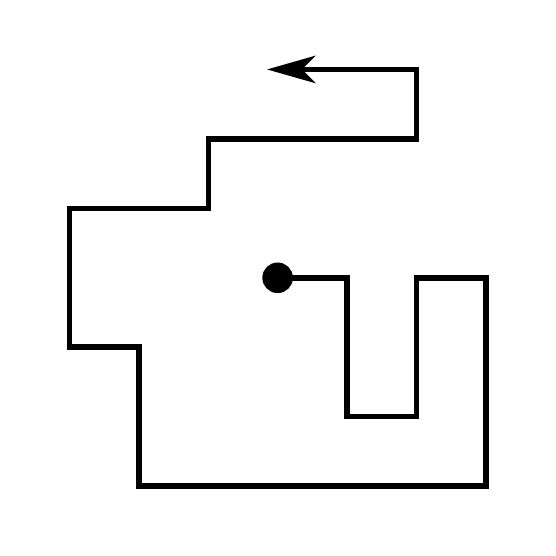}
\caption{}
\end{subfigure}
\caption{Prudent walks on the square lattice: (a) 2-sided, (b) 3-sided and (c) 4-sided (unrestricted).}
\label{fig:square_pruwalks}
\end{figure}

The class of 1-sided prudent walks are equivalent to partially-directed walks, a model of polymer 
adsorption that has been studied for some time (for an early example see~\cite{Privman1989Directed}). 
Duchi and Bousquet-M\'elou solved the generating functions of 2- and 3-sided prudent walks 
(without a boundary) respectively. The generating function of 2-sided walks is algebraic, while that 
of 3-sided walks is non-D-finite. Bousquet-M\'{e}lou also wrote down a functional equation for the 
generating function of general (4-sided) prudent walks, but no solution for the generating function has
yet been found.

\section{Generating functions}

We are able to solve the functional equation for 2-sided prudent walks above an impenetrable surface, however
the addition of the surface and the corresponding interaction fugacity $a$ makes the generating function significantly 
more complex than that of the original (surface-free) model.  Determining the critical critical behaviour 
of the adsorption model requires a great deal more effort. 

We have already noted that 1-sided prudent walks are equivalent to partially-directed walks.
For the 3-sided model, we are able to write down functional equations, but the number of catalytic 
variables is too great to obtain a closed form solution. 
There are no 4-sided walks which are not also 3-sided walks. To see this, note that the S side of 
the bounding box necessarily lies on the impenetrable surface. So any step taking the walk from the E or W sides onto the S side must be in the direction of the origin, and thus cannot possibly be prudent.

For the 2-sided purdent walk model, we begin by defining the generating functions 
\begin{align*}
R(u,v) &:= R(z;u,v;a) = \sum_{n,i,j,\nu}R_{n,i,j,y}z^n u^i v^j a^\nu\\
T(u,v) &:= T(z;u,v;a) = \sum_{n,i,j,\nu}T_{n,i,j,y}z^n u^i v^j a^\nu
\end{align*}
where $R_{n,i,j,\nu}$ (resp. $T_{n,i,j,\nu}$) is the number of $n$-step 2-sided prudent walks which 
start on a horizontal impenetrable surface and end on the right (resp. top) side of their bounding box, 
with a distance $i$ from the endpoint to the top (resp. right) of the box, a distance $j$ from the endpoint 
to the surface, and $\nu$ steps along the surface. We write $R(u,v)$ and $T(u,v)$ for shorthand as $u$ and $v$ 
are the \emph{catalytic} variables -- the ones used in the iterative construction of the walks.
The introduction of the surface breaks the symmetry between walks ending on the top and right of the box, 
forcing us to use two generating functions.

\begin{figure}
\centering
\begin{subfigure}[b]{0.4\textwidth}
\centering
  \begin{picture}(160,80)
    \put(0,0){\includegraphics[scale=1.2]{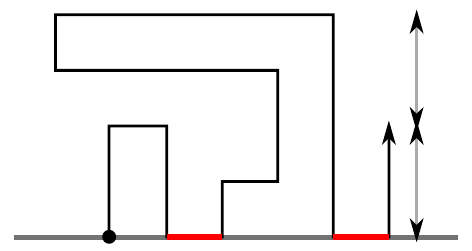}}
    \put(152,25){$j$}
    \put(152,60){$i$}
 \end{picture}
 \caption{}
\end{subfigure}
\hspace{0.1\textwidth}
\begin{subfigure}[b]{0.4\textwidth}
\centering
  \begin{picture}(140,80)
    \put(0,0){\includegraphics[scale=1.2]{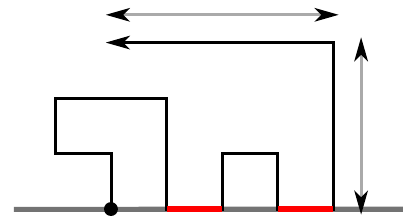}}
    \put(133,30){$j$}
    \put(75,80){$i$}
  \end{picture}
  \caption{}
  \end{subfigure}

\caption{2-sided prudent walks above an impenetrable surface, (a) ending on the right of the box 
and (b) ending on the top, with the distances $i$ and $j$ indicated.}
\label{fig:sq_prudent_surface_catalytic}
\end{figure}

\begin{lem}\label{lem_square_original_func_eqns}
The generating functions $R(u,v)$ and $T(u,v)$ satisfy the functional relations
\begin{equation} \label{eqn:T_eqn}
L(u,v)T(u,v) = \frac{1}{1-zua} - \frac{z^2v}{u-z}T(z,v)+zR(zv,v)-z(1-a)R(zv,0)
\end{equation}
\begin{equation}\label{eqn:R_eqn}
\begin{split}
M(u,v)R(u,v) = 1+zvT(z,v)-\frac{z^2v}{u-zv}R(zv,v)-\frac{z^2u}{v-zu}R(u,zu)-\frac{zu(1-a)}{u-zv}R(u,0)\\+\frac{z^2v(1-a)}{u-zv}R(zv,0)
\end{split}\end{equation}
where
\begin{align*}
L(u,v) &:= L(z;u,v) = 1-\frac{zuv(1-z^2)}{(u-z)(1-zu)}\\
M(u,v) &:= M(z;u,v) = 1-\frac{zuv(1-z^2)}{(v-zu)(u-zv)}
\end{align*}
\end{lem}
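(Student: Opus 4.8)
The plan is to derive the two functional equations directly from a step-by-step construction of 2-sided prudent walks above the surface, following the combinatorial decomposition used by Duchi and Bousquet-M\'elou but carefully tracking the new statistic $\nu$ (steps along the surface) and the broken top/right symmetry. First I would recall the key structural fact: in a 2-sided prudent walk the endpoint lies on the N or E side of the bounding box, and one builds the walk by appending steps, where each new step either stays on the current active side or ``inflates'' the box. Concretely, a walk counted by $R(u,v)$ ends on the right side; from such a walk we may (i) take a N or S step that keeps the endpoint on the right side, adjusting $i$ and $j$; (ii) take an E step, which inflates the box to the right and resets $i$ (the endpoint is now in the NE corner region, so it is simultaneously on the right and top side); or (iii) in some configurations take a W step, which is prudent only if it does not point at an occupied vertex — this is exactly what generates the ``$R$ evaluated at shifted arguments'' terms and the specializations at $v=0$ or via $u\mapsto zv$.

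The main work is bookkeeping. I would set up, for each of the two families, the generating-function translation of each allowed move: a N/S step along the active side multiplies by $z$ and shifts the catalytic variable tracking distance-to-the-far-side or distance-to-surface; crucially, an S step that lands \emph{on} the surface contributes an extra factor $a$ and, because the walk then must move along the surface or turn, produces the $(1-a)$ correction terms seen in \eqref{eqn:T_eqn} and \eqref{eqn:R_eqn} (the $(1-a)$ is the standard device for subtracting off the ``would-be surface step'' weighted incorrectly, i.e. replacing a generic step weight by the surface-adjusted one only on the surface). An E-step in the $R$-construction turns the walk into one that is recorded by $T$ at a specialized value of its catalytic variable, which is the source of the $zvT(z,v)$ term in \eqref{eqn:R_eqn} and the $\frac{1}{1-zua}$ seed in \eqref{eqn:T_eqn} (walks running straight along the surface). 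Collecting every contribution on one side and the ``defects'' (terms where the naive geometric series over a catalytic variable is invalid because the move is blocked by prudence) on the other yields an equation of the shape
\[
\Big(1-\text{(kernel rational function)}\Big)\,R(u,v)=\text{explicit terms in }R,T\text{ at shifted/specialized arguments},
\]
and similarly for $T$; reading off the kernel gives exactly $L(u,v)$ and $M(u,v)$ as stated. I would double-check the kernels independently: $L(u,v)=1-\frac{zuv(1-z^2)}{(u-z)(1-zu)}$ should be the generating function weight of the ``free'' part of the right-side construction with the two poles $u=z$ and $u=1/z$ corresponding respectively to the forbidden inward step and to the boundary of convergence, and $M(u,v)$ should be visibly symmetric in the two ``far-side'' catalytic directions up to the surface.

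The hard part will be getting the $(1-a)$ terms and the argument specializations (especially $R(zv,v)$, $R(zv,0)$, $R(u,zu)$, $R(u,0)$, and the evaluations $T(z,v)$, $R(z,v)$ at $u=z$) exactly right, since each arises from a subtle prudence constraint near a corner or near the surface, and an off-by-one in which vertex is ``already occupied'' changes both the shift and whether the extra $a$ appears. I would verify the final equations by (a) setting $a=1$ and checking that adding $R$ and $T$ recovers Bousquet-M\'elou's surface-free 2-sided functional equation (after accounting for the half-space restriction), and (b) extracting low-order coefficients in $z$ by hand or computer and matching them against a direct enumeration of short 2-sided prudent walks on the surface with their $\nu$-statistics. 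Once both checks pass, the lemma follows; the remainder of the paper's analysis (kernel method, free energy, density) then rests on these two identities.
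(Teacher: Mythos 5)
Your plan has the right general flavour (recursive construction indexed by bounding-box statistics, leading to kernel-form equations), but the decomposition you actually describe --- appending one step at a time to a walk whose state is recorded only by $(i,j)$ --- does not work for prudent walks, and this is a genuine gap rather than mere bookkeeping. Whether a N or S step from the E side of the box is prudent depends on which vertices of that side are already occupied, i.e.\ on the direction and extent of the current run along the side, which is not determined by $(i,j)$. The paper's proof (following Bousquet-M\'elou) instead decomposes by the \emph{last inflating step}: the last step that moved the N or E side of the box, after which the walk performs a maximal unidirectional run along the newly extended side. It is the geometric sums over these terminal runs that produce the kernels $L(u,v)$ and $M(u,v)$ (e.g.\ the run of N steps capped at the top of the box gives $\frac{z(u^{i+1}-z^{i+1})}{u-z}$, whence the pole at $u=z$ and the evaluation $T(z,v)$), and the specialized arguments $R(zv,v)$, $R(u,zu)$, $T(z,v)$ record runs that reach a corner of the box. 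Your proposal never identifies this mechanism, and "defects where the naive geometric series is invalid" is not a substitute for it.

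A second concrete error: you attribute the $(1-a)$ corrections to "an S step that lands on the surface", but this is an \emph{edge}-weighted model --- only horizontal edges lying in the surface carry the fugacity $a$; a vertical step ending at height $0$ carries no weight. In the paper the $(1-a)$ terms arise from inclusion--exclusion on whether the last inflating E step is itself a surface edge: the case $j=0$ contributes $zaR(zv,0)$ while the generic case contributes $zR(zv,v)-zR(zv,0)$, combining to $zR(zv,v)-z(1-a)R(zv,0)$ in \eqref{eqn:T_eqn}, and similarly in \eqref{eqn:R_eqn}. Your verification step (a) is also unreliable as stated: setting $a=1$ does not recover Bousquet-M\'elou's surface-free equation, since the half-plane restriction alters the model even without surface weights (it kills the 4-sided walks and breaks the top/right symmetry). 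The coefficient-extraction check (b) is sound and worth doing, but the proof itself needs the last-inflating-step decomposition to get off the ground.
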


\begin{proof} We iteratively construct walks by considering their \emph{last inflating step} -- the last step which moved either the top or right side of the bounding box. We begin with the construction for walks ending at the top of their box. There are three possibilities:
\begin{itemize}
\item {\bf Case 1.} Neither the top nor right sides of the box have ever moved. Then the walk is empty or consists only of W steps. The generating function for such walks is
\[1+zua+z^2u^2a^2+\ldots = \frac{1}{1-zua}.\]
\item {\bf Case 2.} The right side was the last to move. Then immediately prior to this, the walk ended on the right side of its box, and it then took an E step. After this it must have taken N steps to reach the top of the box, but no further. There are, however, two sub-cases to consider, depending on whether that E step was along the surface. If it was then the walk gains an $a$ weight; otherwise it does not.
\subitem {\bf Case 2a.} The inflating E step was along the surface. The generating function for such walks is
\[\sum_{n,i,\nu}R_{n,i,0,\nu} z^{n+i+1} v^{i}a^{\nu+1} = za R(zv,0).\]\
\subitem {\bf Case 2b.} The inflating E step was not along the surface. The generating function for such walks is then
\[\sum_{\substack{n,i,j,\nu \\ j>0}} R_{n,i,j,\nu} z^{n+i+1} v^{i+j}a^\nu = z R(zv,v) - z R(zv,0).\]
\item {\bf Case 3.} The top side was the last to move. Then immediately prior to this, the walk ended on the top side of its box, and it then took a N step. After this, it may have taken E steps as far as the right side of the box, or any number of W steps. The generating function for such walks is then
\begin{align*}
&\sum_{n,i,j,\nu}T_{n,i,j,\nu}v^{j+1}a^\nu \left(\sum_{k=0}^i z^{k+1}u^{i-k} + \sum_{l=1}^\infty z^{l+1}u^{i+l}\right)\\
\intertext{(where the sum over $k$ counts the walks which go E after the N step, and the sum over $l$ counts those which go W after the N step),}
&= \sum_{n,i,j,\nu}T_{n,i,j,\nu}v^{j+1}a^\nu \left(\frac{z(u^{i+1}-z^{i+1})}{u-z} + \frac{z^2u^{i+1}}{1-zu}\right)\\
&= \frac{zv}{u-z}\left(uT(u,v) - zT(z,v)\right) + \frac{z^2uv}{1-zu}T(u,v).
\end{align*}
\end{itemize}
Every walk counted by $T(u,v)$ falls into exactly one of Cases 1-3. We thus have
\begin{multline*}
T(u,v) = \frac{1}{1-zua} + zaR(zv,0) +zR(zv,v)-zR(zv,0) \\+\frac{zv}{u-z}\left(uT(u,v) - zT(z,v)\right) + \frac{z^2uv}{1-zu}T(u,v).
\end{multline*}
Rearranging this gives~\eqref{eqn:T_eqn}.

A similar consideration can be made for walks ending on the right side of their box. The same three cases apply.
\begin{itemize}
\item {\bf Case 1.} Neither the top nor right sides of the box have moved. The only such walk is the empty walk, with generating function 1.
\item {\bf Case 2.} The right side of the box was the last to move. Then the last inflating step was E. After this, the walk may take S steps as far as the surface or N steps as far as the top of the box.
\subitem {\bf Case 2a.} The last inflating step was along the surface, and the walk gained an $a$ weight. The generating function for such walks is
\[\frac{za}{u-zv}\left(uR(u,0) - zvR(zv,0)\right).\]
\subitem {\bf Case 2b.} The last inflating step was not along the surface. The generating function for such walks is
\begin{multline*}
\frac{z}{u-zv}\left(uR(u,v)-zvR(zv,v)\right)  -\frac{z}{u-zv}\left(uR(u,0)-zvR(zv,0)\right)\\+ \frac{z^2u}{v-zu}\left(R(u,v)-R(u,zu)\right).
\end{multline*}
\item {\bf Case 3.} The top side of the box was the last to move. Then the last inflating step was N. After this, the walk stepped E as far as the right side of the box. The generating function for such walks is
\[zvT(z,v).\]
\end{itemize}
Summing all these contributions, we obtain
\begin{multline*}
R(u,v) = 1 + \frac{za}{u-zv}\left(uR(u,0) - zvR(zv,0)\right) + \frac{z}{u-zv}\left(uR(u,v)-zvR(zv,v)\right) \\ -\frac{z}{u-zv}\left(uR(u,0)-zvR(zv,0)\right)+ \frac{z^2u}{v-zu}\left(R(u,v)-R(u,zu)\right) + zvT(z,v)
\end{multline*}
which is equivalent to~\eqref{eqn:R_eqn}.
\end{proof}

\begin{lem}\label{lem:it_ker_func_eqn}
The generating functions $T(u,v)$ and $R(u,v)$ satisfy the functional equation
\begin{equation}\label{eqn:R_eqn_elim_tvv_terms}
M(u,v)R(u,v) = A(u,v) + B(u,v)R(u,0)+C(u,v)T(z,v)
\end{equation}
where 
\begin{align*}
\lambda(v) &:= \lambda(z;v) = \frac{1+z^2-zv+z^3v-\sqrt{(1+z^2-zv+z^3v)^2-4z^2}}{2z}\\
A(u,v) & := A(z;u,v;a) = \frac{v(u-z^2u-zu\lambda(v)a+z^2v\lambda(v)a)}{(u-zv)(v-zu)(1-z\lambda(v)a)} \\
B(u,v) & := B(z;u,v;a) = -\frac{zu(u+v-zv-z^2v-va+z^2va)}{(u-zv)(v-zu)}\\
C(u,v) & := C(z;u,v) = -\frac{zv(zu-u\lambda(v)+zv\lambda(v))}{(\lambda(v)-z)(u-zv)}
\end{align*}
\end{lem}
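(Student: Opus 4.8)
The plan is to obtain \eqref{eqn:R_eqn_elim_tvv_terms} by eliminating the $R(zv,v)$ and $R(zv,0)$ terms from equation \eqref{eqn:R_eqn} using the companion equation \eqref{eqn:T_eqn}. The two relations from Lemma~\ref{lem_square_original_func_eqns} each involve the unknowns $T(u,v)$, $R(u,v)$, together with the specialised series $T(z,v)$, $R(zv,v)$ and $R(zv,0)$. The idea is to first use \eqref{eqn:T_eqn} to express the combination $zR(zv,v) - z(1-a)R(zv,0)$ appearing there in terms of $T(u,v)$, $T(z,v)$ and the kernel $L(u,v)$; but that combination is not quite the same one that appears in \eqref{eqn:R_eqn}, so a little care is needed. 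A cleaner route is the standard \emph{kernel method}: in \eqref{eqn:T_eqn} the factor $L(u,v)$ multiplying $T(u,v)$ can be cancelled by choosing $u$ to be a root of $L(z;u,v)=0$. Solving $1 - \dfrac{zuv(1-z^2)}{(u-z)(1-zu)} = 0$ for $u$ as a power series in $z$ gives a quadratic in $u$, whose small root is precisely $\lambda(v)$ as defined in the statement (one checks that the branch of the square root is the one vanishing appropriately as $z\to 0$). Substituting $u = \lambda(v)$ into \eqref{eqn:T_eqn} makes the left side vanish and yields
\[
0 = \frac{1}{1-z\lambda(v)a} - \frac{z^2 v}{\lambda(v)-z}\,T(z,v) + zR(zv,v) - z(1-a)R(zv,0),
\]
which solves for $zR(zv,v) - z(1-a)R(zv,0)$ purely in terms of $T(z,v)$ and explicit rational functions of $z,v,a$.

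Next I would substitute this expression back into \eqref{eqn:R_eqn}. Looking at \eqref{eqn:R_eqn}, the $R(zv,v)$ and $R(zv,0)$ terms there are $-\dfrac{z^2v}{u-zv}R(zv,v) + \dfrac{z^2v(1-a)}{u-zv}R(zv,0) = -\dfrac{z^2v}{u-zv}\bigl(R(zv,v) - (1-a)R(zv,0)\bigr) = -\dfrac{zv}{u-zv}\cdot\bigl(zR(zv,v) - z(1-a)R(zv,0)\bigr)$, so exactly the combination produced by the kernel substitution appears, up to the prefactor $-\dfrac{zv}{u-zv}$. Replacing it, the $R(zv,\cdot)$ unknowns disappear entirely and \eqref{eqn:R_eqn} becomes a relation among $M(u,v)R(u,v)$, $R(u,0)$, $R(u,zu)$ and $T(z,v)$, with all other terms explicit. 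The term $-\dfrac{z^2u}{v-zu}R(u,zu)$ still contains the unknown $R(u,zu)$; but note that $M(u,v) = 1 - \dfrac{zuv(1-z^2)}{(v-zu)(u-zv)}$ has a factor $(v-zu)$ in its denominator, and indeed the combination $M(u,v)R(u,v) + \dfrac{z^2u}{v-zu}R(u,zu)$ is exactly what one rearranges: I expect that after clearing the $(v-zu)$ denominator, the $R(u,zu)$ term is absorbed — more precisely, the derivation of \eqref{eqn:R_eqn} groups the $u$-diagonal contributions so that $R(u,zu)$ only enters through the kernel $M$, and carrying the algebra through leaves no free $R(u,zu)$. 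After collecting terms over the common denominator $(u-zv)(v-zu)(1-z\lambda(v)a)$ and reading off the coefficients of $1$, of $R(u,0)$ and of $T(z,v)$, one obtains precisely $A(u,v)$, $B(u,v)$ and $C(u,v)$ as stated; the $(\lambda(v)-z)$ in the denominator of $C$ is the trace of the $\dfrac{1}{\lambda(v)-z}$ from the kernel substitution.

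The main obstacle I anticipate is purely a bookkeeping one: verifying that the $R(u,zu)$ term really does cancel (or combine into $M(u,v)$) rather than surviving as a genuine extra unknown, and then matching the resulting rational functions exactly against the closed forms $A$, $B$, $C$ in the statement. This requires care with the several rational prefactors $\dfrac{z}{u-zv}$, $\dfrac{z^2u}{v-zu}$, $\dfrac{zu(1-a)}{u-zv}$ appearing in \eqref{eqn:R_eqn}, common-denominator manipulations, and one algebraic identity using that $\lambda(v)$ satisfies $(\lambda-z)(1-z\lambda) = zv(1-z^2)\lambda$, i.e.\ $z\lambda^2 - (1+z^2-zv+z^3v)\lambda + z = 0$, which is how one simplifies expressions like $\dfrac{z^2v}{\lambda(v)-z}$ into the forms with $1-z\lambda(v)a$ in the denominator. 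No conceptual difficulty is expected beyond this; the kernel method does all the real work, and the rest is the routine (if lengthy) simplification that the lemma's explicit formulas record.
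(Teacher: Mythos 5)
Your main step --- cancelling the kernel $L$ by setting $u=\lambda(v)$ in \eqref{eqn:T_eqn} to express $zR(zv,v)-z(1-a)R(zv,0)$ in terms of $T(z,v)$, then substituting into \eqref{eqn:R_eqn} --- is exactly the paper's route, and your observation that the $R(zv,\cdot)$ terms in \eqref{eqn:R_eqn} occur in precisely that combination (up to the prefactor $-\tfrac{zv}{u-zv}$) is correct. However, there is a genuine gap in your treatment of the $R(u,zu)$ term. You hope it ``is absorbed'' into the kernel $M$ or cancels after clearing the $(v-zu)$ denominator; it does not. $R(u,zu)$ is a different specialisation from $R(u,v)$, and no amount of regrouping makes the term $-\tfrac{z^2u}{v-zu}R(u,zu)$ part of $M(u,v)R(u,v)$. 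Left alone, it survives as a genuine extra unknown and the lemma's equation does not close.

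The missing idea is a second, independent specialisation: set $v=0$ in \eqref{eqn:R_eqn} itself to obtain $(1+z-za)R(u,0) = 1+zR(u,zu)$, i.e.\ $R(u,zu) = \bigl((1+z-za)R(u,0)-1\bigr)/z$. Substituting this converts the $R(u,zu)$ term into an $R(u,0)$ term plus an explicit rational function. This is exactly why $R(u,0)$ appears in \eqref{eqn:R_eqn_elim_tvv_terms} with the coefficient
\[
B(u,v) = -zu\,\frac{(1-a)(v-zu)+(1+z-za)(u-zv)}{(u-zv)(v-zu)} = -\frac{zu(u+v-zv-z^2v-va+z^2va)}{(u-zv)(v-zu)},
\]
which combines the original $-\tfrac{zu(1-a)}{u-zv}R(u,0)$ from \eqref{eqn:R_eqn} with the contribution coming from the eliminated $R(u,zu)$. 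Without this step your derivation cannot produce the stated $B(u,v)$. The rest of your outline (the branch choice for $\lambda$, the quadratic relation $z\lambda^2-(1+z^2-zv+z^3v)\lambda+z=0$ used to simplify the $T(z,v)$ coefficient into the stated $C(u,v)$, and the common-denominator bookkeeping) is consistent with the paper's proof.
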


\begin{proof}
Setting $v=0$ in \eqref{eqn:R_eqn} yields
\begin{equation*}
(1+z-za)R(u,0) = 1+zR(u,zu),
\end{equation*}
which can be used to eliminate $R(u,zu)$ from \eqref{eqn:R_eqn}. 
Meanwhile, $\lambda(v)$ satisfies the equation $L(\lambda(v),v)=0$ 
(as does another function of $v$, but $\lambda$ is the only root which is a power series 
in $z$ and thus the only one which will eventually give a well-defined solution), 
so substituting $u=\lambda(v)$ into \eqref{eqn:T_eqn} cancels the LHS. 
The resulting equation can be written as
\begin{equation}\label{eqn:Teqn_cancel_kernel}
-\frac{z^2v}{u-zv}R(zv,v)+\frac{z^2v (1-a)}{u-zv}R(zv,0) = \frac{zv}{(u-zv)(1-z\lambda(v)a)}-\frac{z^3v^2}{(u-zv)(\lambda(v)-z)}T(z,v).
\end{equation}
This can be used to eliminate $R(zv,v)$ and $R(zv,0)$ from \eqref{eqn:R_eqn}. 
Simple manipulation gives \eqref{eqn:R_eqn_elim_tvv_terms} in the stated form.
\end{proof}

\begin{prop}\label{prop:Ru0_soln}
The form of $R(u,0)$ is given by
\begin{equation}\label{eqn:Ru0_soln}
R(u,0) = \sum_{n=0}^{\infty} \mathcal{H}(u\Lambda^{2n})\prod_{k=0}^{n-1}\mathcal{I}(u\Lambda^{2k})
\end{equation}
when considered as a formal power series in $z,u,a$, where
\begin{align*}
\Lambda &:= \lambda(1) = \frac{1-z+z^2+z^3-\sqrt{1-2z-z^2-z^4+2z^5+z^6}}{2z}\\
\mathcal{H}(q) &:= \mathcal{H}(z;q;a) = \frac{1}{\mathcal{J}(a)}\cdot
\frac{(1-z^2)(1-\Lambda^2)(1-za(1+z)\lambda(q\Lambda)+za\lambda(q\Lambda)^2)}{(1-z\Lambda)(1-za\lambda(q\Lambda))(1-(1-z-z^2+\Lambda)\lambda(q\Lambda))}\\
\mathcal{I}(q) &:= \mathcal{I}(z;q;a) =
\frac{1}{\mathcal{J}(a)}\cdot\frac{(\Lambda-z)(1-z-z^2-a+z^2a+\Lambda)(z-(1-z\Lambda)\lambda(q\Lambda))}{z(1-z\Lambda)(1-(1-z-z^2+\Lambda)\lambda(q\Lambda))}\\
\mathcal{J}(a) &:= \mathcal{J}(z;a) = 1+\Lambda-z\Lambda-z^2\Lambda-\Lambda a+z^2\Lambda a
\end{align*}
\end{prop}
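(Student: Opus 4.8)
The plan is to solve the functional equation \eqref{eqn:R_eqn_elim_tvv_terms} for $R(u,0)$ by the kernel method, exploiting the symmetry of the kernel $M(u,v)$ and then iterating the resulting recursion. The key preliminary facts are: (i) $M$ is symmetric, $M(u,v)=M(v,u)$; and (ii) clearing denominators in $M(u,v)=0$ yields $zv^{2}-(1+z^{2}-z+z^{3})uv+zu^{2}=0$, whose two roots in $v$ are $v=u\Lambda$ and $v=u/\Lambda$, where $\Lambda=\lambda(1)$ is the root that is a power series in $z$ --- indeed $L(\Lambda,1)=0$ is exactly this quadratic in $\Lambda$. Since $\Lambda=z+O(z^{2})$, the substitution $v=u\Lambda$ is legitimate: $R(u,u\Lambda)$, $R(u,0)$ and $T(z,u\Lambda)$ remain well-defined formal power series, and it kills the left-hand side of \eqref{eqn:R_eqn_elim_tvv_terms}. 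By the symmetry of $M$, so does the conjugate substitution $u=v\Lambda$.

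First I would substitute $v=u\Lambda$ into \eqref{eqn:R_eqn_elim_tvv_terms} to get
\[
A(u,u\Lambda)+B(u,u\Lambda)R(u,0)+C(u,u\Lambda)T(z,u\Lambda)=0 .
\]
Then I would substitute $u=v\Lambda$ instead, relabel $v\mapsto u$, and finally replace $u$ by $u\Lambda$, to obtain
\[
A(u\Lambda^{2},u\Lambda)+B(u\Lambda^{2},u\Lambda)R(u\Lambda^{2},0)+C(u\Lambda^{2},u\Lambda)T(z,u\Lambda)=0 .
\]
Both relations contain the same auxiliary series $T(z,u\Lambda)$ --- this is the point of using the two conjugate kernel substitutions --- so eliminating it yields a functional recursion
\[
R(u,0)=\mathcal H(u)+\mathcal I(u)\,R(u\Lambda^{2},0),
\qquad
\mathcal I(u)=\frac{C(u,u\Lambda)\,B(u\Lambda^{2},u\Lambda)}{C(u\Lambda^{2},u\Lambda)\,B(u,u\Lambda)},
\]
with $\mathcal H(u)$ the analogous combination of the $A$'s, $B$'s and $C$'s. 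The next step is to simplify: with the relation $(\Lambda-z)(1-z\Lambda)=z\Lambda(1-z^{2})$ --- equivalently $\Lambda-z=z\Lambda(1-z-z^{2}+\Lambda)$, i.e.\ $L(\Lambda,1)=0$ --- every power of $u$ cancels between numerator and denominator, so that $\mathcal H$ and $\mathcal I$ depend on $u$ only through $\lambda(u\Lambda)$; with $q=u$ this gives precisely the $\mathcal H(z;q;a)$ and $\mathcal I(z;q;a)$ of the statement, with $\mathcal J(a)$ entering through $B(u,u\Lambda)=-z\mathcal J(a)/\big((1-z\Lambda)(\Lambda-z)\big)$.

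Finally I would iterate: replacing $u$ by $u\Lambda^{2}$ repeatedly gives, for every $N\ge 1$,
\[
R(u,0)=\sum_{n=0}^{N-1}\mathcal H(u\Lambda^{2n})\prod_{k=0}^{n-1}\mathcal I(u\Lambda^{2k})
+R(u\Lambda^{2N},0)\prod_{k=0}^{N-1}\mathcal I(u\Lambda^{2k}) .
\]
Because $\mathcal I$ has strictly positive $z$-valuation (using $\Lambda-z=O(z^{2})$ and $z-(1-z\Lambda)\lambda(u\Lambda)=O(z^{2})$), the valuation of $\prod_{k<N}\mathcal I(u\Lambda^{2k})$ tends to infinity with $N$, so the remainder term vanishes in the limit and the series converges, all in the ring of formal power series in $z,u,a$. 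Letting $N\to\infty$ gives \eqref{eqn:Ru0_soln}; uniqueness of the power-series solution of the functional equations then identifies it with $R(u,0)$.

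The step I expect to be the main obstacle is the simplification: showing that the rational functions $\mathcal H(u)$ and $\mathcal I(u)$ collapse to the compact closed forms in the statement is a long though purely mechanical computation (best handled by computer algebra), and, more delicately, one must check that the intermediate quotients are genuine power series in $z$ and not merely Laurent series --- the factors $\Lambda-z$, $\lambda(u\Lambda)-z$ and $C(u\Lambda^{2},u\Lambda)$ all vanish at $z=0$ and occur in denominators --- which is exactly where the algebraic relation satisfied by $\Lambda$ is used.
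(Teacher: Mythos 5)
Your proposal is correct and follows essentially the same route as the paper: the iterated kernel method with the two conjugate substitutions $(u,v)=(u,u\Lambda)$ and $(u,v)=(u\Lambda^2,u\Lambda)$, elimination of $T(z,u\Lambda)$ to obtain the recursion $R(u,0)=\mathcal H(u)+\mathcal I(u)R(u\Lambda^2,0)$ with the same expression for $\mathcal I$, and iteration justified by the positive $z$-valuation of $\mathcal I$ (the paper notes $\mathcal I(u)=z^4(1-a-u+ua)+O(z^5)$, which makes your remainder-term argument go through).
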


\begin{proof}
Equation \eqref{eqn:R_eqn_elim_tvv_terms} is susceptible to the iterative kernel method. 
The kernel $M$ is cancelled at $(u,v)=(u,u\Lambda)$ (and since $M(u,v)=M(v,u)$, 
it also disappears at $(u,v)=(v\Lambda,v)$), and so by substituting the pairs
\[(u,v) = (u,u\Lambda) \quad \text{and} \quad (u,v) = (u\Lambda^2,u\Lambda),\]
we obtain two equations in $R(u,0), R(u\Lambda^2,0)$ and $T(z,u\Lambda)$. 
Eliminating $T(z,u\Lambda)$ gives
\begin{equation}\label{eqn:itker_cancelT}
R(u,0) = \mathcal{H}(u) + \mathcal{I}(u)R(u\Lambda^2,0)
\end{equation}
where
\begin{align*}
\mathcal{H}(u) &= -\frac{A(u,u\Lambda)}{B(u,u\Lambda)} + \frac{C(u,u\Lambda)A(u\Lambda^2,u\Lambda)}{B(u,u\Lambda)C(u\Lambda^2,u\Lambda)}\\
\mathcal{I}(u) &= \frac{C(u,u\Lambda)B(u\Lambda^2,u\Lambda)}{B(u,u\Lambda)C(u\Lambda^2,u\Lambda)}
\end{align*}
After simplification $\mathcal{H}$ and $\mathcal{I}$ take the form given in the Proposition.

Iterating \eqref{eqn:itker_cancelT} gives \eqref{eqn:Ru0_soln}, provided that everything converges 
as a formal power series. But now $\mathcal{H}(u)$ is a power series in $z$ with coefficients 
in $\mathbb{Z}[u,a]$ of the form $1+za+O(z^2)$, and likewise $\mathcal{I}(u)$ is a power series 
of the form $z^4(1-a-u+ua) + O(z^5)$. As such, the sum converges as a formal power series in $z$ 
with coefficients in $\mathbb{Z}[u,a]$.
\end{proof}

\begin{thm}\label{thm:square_genfunc_solution}
The generating functions $R(u,v)$ and $T(u,v)$ have the solutions
\begin{align*}
R(u,v) &= \frac{1}{M(u,v)}\left[A(u,v)+B(u,v)R(u,0)+C(u,v)T(z,v)\right]\\
T(u,v) &= \frac{1}{L(u,v)}\left[\frac{1}{1-zua}-\frac{1}{1-z\lambda(v)a}+z^2v\left(\frac{1}{\lambda(v)-z} - \frac{1}{u-z}\right)T(z,v)\right]
\end{align*}
where
\[T(z,v) = \frac{-1}{C(v\Lambda,v)}\left[A(v\Lambda,v)+B(v\Lambda,v)R(v\Lambda,0)\right]\]
and $R(u,0), R(v\Lambda,0)$ are given by \eqref{eqn:Ru0_soln}. The overall generating function for 
2-sided prudent walks above an impenetrable surface is given by
\begin{align*}
W(u,v) := W(z;u,v;a) &= \sum_{n,i,j,\nu}W_{n,i,j,\nu}z^n u^i v^j a^\nu\\
&= R(u,v) + T(u,v) - T(0,v)
\end{align*}
where $W_{n,i,j,\nu}$ is the number of $n$-step 2-sided prudent walks which end a distance 
$i$ from the NE corner of their box and a distance $j$ above the surface, with $\nu$ steps along the surface.
\end{thm}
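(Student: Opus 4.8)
The plan is to obtain all four displayed formulas by assembling the functional equations already derived, so that only the last identity requires a genuinely new (and short) argument.

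First I would read off $R(u,v)$: since $M(u,v)=1+O(z)$ is a unit in $\mathbb{Z}[u,v,a][[z]]$, dividing equation~\eqref{eqn:R_eqn_elim_tvv_terms} of Lemma~\ref{lem:it_ker_func_eqn} through by $M(u,v)$ gives the stated expression verbatim, with $R(u,0)$ supplied by Proposition~\ref{prop:Ru0_soln}. Next I would produce the closed form for $T(z,v)$ by specialising $u\mapsto v\Lambda$ in~\eqref{eqn:R_eqn_elim_tvv_terms}; this is a legitimate operation on formal power series because $\Lambda=\lambda(1)$ lies in $z\,\mathbb{Z}[[z]]$. Since $M(u,v)=M(v,u)$ and $\Lambda$ satisfies $L(\Lambda,1)=0$, i.e.\ $z\Lambda(1-z^2)=(\Lambda-z)(1-z\Lambda)$, the kernel $M(v\Lambda,v)$ vanishes (as already noted in the proof of Proposition~\ref{prop:Ru0_soln}), so the left side of~\eqref{eqn:R_eqn_elim_tvv_terms} dies and one is left with $0=A(v\Lambda,v)+B(v\Lambda,v)R(v\Lambda,0)+C(v\Lambda,v)T(z,v)$; solving for $T(z,v)$ gives the claimed formula, with $R(v\Lambda,0)$ obtained from~\eqref{eqn:Ru0_soln} by substitution. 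Finally I would recover $T(u,v)$ from the \emph{original} equation~\eqref{eqn:T_eqn} of Lemma~\ref{lem_square_original_func_eqns}: the combination $zR(zv,v)-z(1-a)R(zv,0)$ sitting on its right-hand side is, after rescaling by $-(u-zv)/(zv)$, precisely what~\eqref{eqn:Teqn_cancel_kernel} — the $u=\lambda(v)$ specialisation of~\eqref{eqn:T_eqn} — expresses solely in terms of $T(z,v)$; substituting this in, merging the two $T(z,v)$-terms and the two $1/(1-z(\cdot)a)$-terms, and dividing by the unit $L(u,v)=1+O(z)$ yields the displayed closed form.

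For the overall generating function $W(u,v)$ I would argue combinatorially. Every $2$-sided prudent walk ends, by definition, on the N or E side of its bounding box. Walks ending on the E side are enumerated by $R(u,v)$ and walks ending on the N side by $T(u,v)$, and in both generating functions the exponent of $u$ records the distance from the endpoint to the NE corner, the exponent of $v$ the height above the surface, and $a$ the number of surface steps. The two families overlap exactly in the walks whose endpoint is the NE corner itself, which have $u$-degree $0$ in either enumeration and are therefore counted — with the correct $z,v,a$ statistics — by $T(0,v)$ (equivalently by $R(0,v)$). Inclusion–exclusion then gives $W(u,v)=R(u,v)+T(u,v)-T(0,v)$, with the coefficient $W_{n,i,j,\nu}$ as described.

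The main obstacle, such as it is, is the formal-power-series bookkeeping around the substitution $u\mapsto v\Lambda$. Individually $A(v\Lambda,v)$, $B(v\Lambda,v)$ and $C(v\Lambda,v)$ carry a simple pole in $z$ (coming from $(v-zu)$ or $(\lambda(v)-z)$ evaluated at $u=v\Lambda$), so one must check that these poles cancel in the combination $A(v\Lambda,v)+B(v\Lambda,v)R(v\Lambda,0)+C(v\Lambda,v)T(z,v)$; they must, because this combination equals $M(v\Lambda,v)R(v\Lambda,v)$, which is zero since $M(v\Lambda,v)=0$ and $R(v\Lambda,v)$ is a genuine power series. Hence $-(A(v\Lambda,v)+B(v\Lambda,v)R(v\Lambda,0))/C(v\Lambda,v)$ is a well-defined power series — indeed it must be $T(z,v)$, which manifestly is one. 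Confirming that the valuations of $\mathcal{H}$, $\mathcal{I}$, and therefore of $R(u,0)$ and $R(v\Lambda,0)$, make every substitution converge is the remaining routine check; everything else is simplification of rational functions in $z,u,v,a$.
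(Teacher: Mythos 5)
Your proposal is correct and follows essentially the same route as the paper's proof: cancel the kernel $M$ at $(u,v)=(v\Lambda,v)$ to extract $T(z,v)$, rearrange \eqref{eqn:R_eqn_elim_tvv_terms} for $R(u,v)$, substitute the rescaled \eqref{eqn:Teqn_cancel_kernel} into \eqref{eqn:T_eqn} for $T(u,v)$, and use inclusion--exclusion over the NE corner for $W(u,v)$. The extra checks you add (pole cancellation in $A+BR+CT$ at $u=v\Lambda$ and the formal-power-series well-definedness of $R(v\Lambda,0)$) are sound and slightly more careful than the published argument.
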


\begin{proof}
Substituting $(u,v) = (v\Lambda,v)$ into \eqref{eqn:R_eqn_elim_tvv_terms} cancels the LHS, 
and rearranging gives the stated form of $T(z,v)$. Since $v\Lambda$ is a power series in $z$ 
with coefficients in $\mathbb{Z}[v]$, $R(v\Lambda,0)$ is well-defined as a formal power series. 
Rearranging \eqref{eqn:R_eqn_elim_tvv_terms} then gives $R(u,v)$.

Equation \eqref{eqn:Teqn_cancel_kernel} can be rewritten as 
\[zR(zv,v) -z(1-a)R(zv,0) = \frac{-1}{1-z\lambda(v)a} + \frac{z^2v}{\lambda(v)-z}T(z,v),\]
and this can be used to eliminate $R(zv,v)$ and $R(zv,0)$ from \eqref{eqn:T_eqn}. 
Rearranging the resulting equation gives the stated expression for $T(u,v)$.

The overall generating function $W(u,v)$ is then found by adding the generating functions
 for walks ending on the right and the top of the box, and we subtract $T(0,v)$ (or equivalently, $R(0,v)$) because walks 
ending at the NE corner have been counted twice.
\end{proof}

We present here the functional equations for general (3- or 4-sided) prudent walks on the square lattice, 
though we are unable to solve them. Define the generating functions
\begin{align*}
R^*(u,v,w) &:= R^*(z;u,v,w;a) = \sum_{n,i,j,k,\nu}R^*_{n,i,j,k,y}z^n u^i v^j w^k a^\nu\\
T^*(u,v,w) &:= T^*(z;u,v,w;a) = \sum_{n,i,j,k,\nu}T^*_{n,i,j,k,y}z^n u^i v^j w^k a^\nu
\end{align*}
where $R^*_{n,i,j,k,\nu}$ counts $n$-step walks ending on the right of their box and $T^*_{n,i,j,k,\nu}$ 
counts those ending at the top of their box. In both cases $i$ is the distance from the endpoint to 
the NE corner of the box, $j$ is the distance from the endpoint to the surface, $k$ is the distance 
from the endpoint to the W side of the box, and $\nu$ is the number of occupied edges along the surface.

\begin{figure}
\centering
\begin{subfigure}[b]{0.4\textwidth}
\centering
\begin{picture}(150,90)
\put(0,0){\includegraphics[scale=1.4]{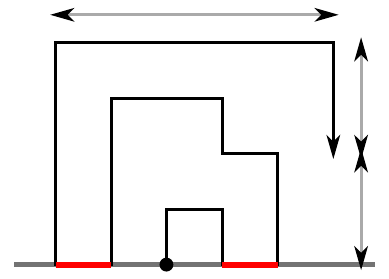}}
\put(153,70){$i$}
\put(153,27){$j$}
\put(77,112){$k$}
\end{picture}
\caption{}
\end{subfigure}
\hspace{2cm}
\begin{subfigure}[b]{0.4\textwidth}
\centering
\begin{picture}(160,90)
\put(0,0){\includegraphics[scale=1.4]{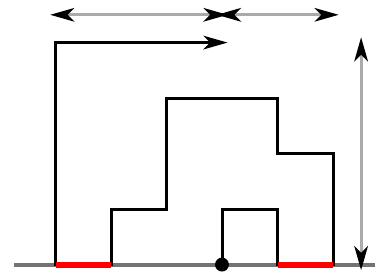}}
\put(155,50){$j$}
\put(55,112){$k$}
\put(112,112){$i$}
\end{picture}
\caption{}
\end{subfigure}
\caption{3-sided prudent walks above an impenetrable surface, (a) ending on the right of the box and 
(b) ending on the top, with the distances $i$, $j$ and $k$ indicated.}
\label{img:pru_sq_3sided_catalytic}
\end{figure}

\begin{lem} The generating functions $R^*(u,v,w)$ and $T^*(u,v,w)$ satisfy the equations
\begin{equation}\begin{split}
L^*(u,v,w)T^*(u,v,w) = 1 + zwR^*(zv,v,w) + zw(a-1)R^*(zv,0,w) + zuR^*(zv,v,u)\\ + zu(a-1)R^*(zv,0,u) - \frac{z^2vw}{u-zw}T^*(zw,v,w)-\frac{z^2uv}{w-zu}T^*(u,v,zu)
\end{split}\end{equation}
\begin{equation}\begin{split}
L^*(u,w,v)R^*(u,v,w) = 1 - \frac{z^2vw}{u-zv}R^*(zv,v,w) - \frac{z^2uw}{v-zu}R^*(u,zu,w)+\frac{zuw(a-1)}{u-zv}R^*(u,0,w)\\-\frac{z^2vw(a-1)}{u-zv}R^*(zv,0,w) + zvT^*(zw,v,w)
\end{split}\end{equation}
where
\[L^*(u,v,w):= L^*(z;u,v,w) = 1-\frac{zuvw(1-z^2)}{(u-zw)(w-zu)}.\]
\end{lem}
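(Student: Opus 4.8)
The plan is to repeat the decomposition of Lemma~\ref{lem_square_original_func_eqns} with one extra catalytic variable and one new case: classify every walk by its \emph{last inflating step}, now meaning the last step that moved the top, the right, \emph{or the left} side of its bounding box. The left side is the one genuinely new ingredient, since a $3$-sided walk may grow, and end, to the west; it produces an extra case whenever the walk ends on the top.

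For $T^*(u,v,w)$, walks ending on the top fall into four cases.
\textbf{Case 1} (no inflating step): the first step of any nonempty walk is $N$, $E$ or $W$ (an initial $S$ step being impossible on the surface) and each of these extends the box, so only the empty walk qualifies, contributing $1$ --- note the contrast with Lemma~\ref{lem_square_original_func_eqns}, where $W$-steps are not inflating and walks made only of $W$-steps sit in Case~1; here they reappear in Case~$2'$ below.
\textbf{Case 2} (last inflating step $E$): exactly as before, the walk previously ended on the right side (counted by $R^*$), took the $E$-step (contributing $z$, and $a$ when the step lay on the surface), then climbed by $N$-steps to the top; the one change is that the $E$-step raises the distance to the west side by one, contributing an extra factor $w$ and giving $zwR^*(zv,v,w)+zw(a-1)R^*(zv,0,w)$.
\textbf{Case $2'$} (last inflating step $W$): the horizontal reflection symmetry of the half-space carries such a walk bijectively to a Case~2 walk while interchanging the distance to the NE corner with the distance to the west side, i.e.\ swapping the roles of $u$ and $w$; applying this to the Case~2 contribution gives $zuR^*(zv,v,u)+zu(a-1)R^*(zv,0,u)$.
\textbf{Case 3} (last inflating step $N$): afterwards the walk runs $E$ towards the NE corner or $W$ towards the NW corner, and now \emph{both} runs are finite (overshooting either corner would be a further inflating step), so each is a finite geometric sum; evaluating these, the self-referential $T^*(u,v,w)$ term assembles into $1-\tfrac{zuvw(1-z^2)}{(u-zw)(w-zu)}=L^*(u,v,w)$, leaving the boundary terms $-\tfrac{z^2vw}{u-zw}T^*(zw,v,w)$ and $-\tfrac{z^2uv}{w-zu}T^*(u,v,zu)$. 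Summing the four contributions gives the stated equation for $T^*$.

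For $R^*(u,v,w)$, walks ending on the right are handled exactly as in Lemma~\ref{lem_square_original_func_eqns}, with the factor $w$ along for the ride: the post-inflating moves ($S$- or $N$-steps in Case~2, $E$-steps to the right side in Case~3) do not change the width, so every term of \eqref{eqn:R_eqn} simply acquires a factor $w$ (with $R\mapsto R^*$, $T(z,v)\mapsto T^*(zw,v,w)$, and $M(u,v)\mapsto L^*(u,w,v)$). The one fact to check is that the last inflating step is never a $W$-step: after a $W$-step moving the left side, the walk sits in the leftmost column with its predecessor immediately to the east, so by prudence it can never step east again, hence can never return to the right side. Thus only Cases~1 (empty walk), 2 ($E$ last) and 3 ($N$ last) occur, and collecting them yields the second displayed equation.

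The routine part is the bookkeeping of the three statistics $i,j,k$ through each post-inflating run --- which grows, which shrinks, and where it stops. I expect the only real obstacle to be the genuinely new point: verifying carefully that the reflection argument in Case~$2'$ is set up correctly, and that, together with the now-smaller Case~1, the four cases really do partition the walks ending on the top without double counting.
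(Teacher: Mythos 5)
Your decomposition by last inflating step (now N, E or W), with the W case handled by reflecting left-enders onto $R^*$ and with both post-N runs becoming finite geometric sums, is exactly the argument the paper has in mind — it omits the proof but sketches precisely this, and your bookkeeping reproduces every term, including the kernels $L^*(u,v,w)$ and $L^*(u,w,v)$. One soft spot: your justification that a walk ending on the right side never has a W step as its last inflating step only rules out an \emph{immediate} eastward step (the predecessor blocks it), whereas after subsequent N or S moves within the leftmost column that vertex is no longer due east; to close this you also need the $3$-sided condition — an E step from the west column at a height below the top would leave the endpoint off the N/E/W boundary, and an E step along the top row is blocked by the previously visited vertex that established the top. (A second, cosmetic point: the constant term $1$ and the term $zvT^*(zw,v,w)$ do \emph{not} acquire the factor $w$ you ascribe to ``every term'' of \eqref{eqn:R_eqn}; your explicit substitution $T(z,v)\mapsto T^*(zw,v,w)$ is the correct one, so state it that way.)
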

We omit the proof as the details follow in much the same way as Lemma~\ref{lem_square_original_func_eqns}. Here, inflating steps can be 
N, W or E; inflating steps to the N are attached to walks counted by $T^*$, and inflating steps 
to the W and E are attached to walks counted by $R^*$ (symmetry means $R^*$ also counts walks 
ending on the \emph{left} side of the bounding box).

Though the equations are somewhat more complicated, the problem we face here is similar to the 
difficulty of solving general prudent walks without a boundary~\cite{BousquetMelou2010Families} -- 
the presence of three catalytic variables seems to render the kernel method unable to produce 
a closed form expression.

\section{Singularities and asymptotic behaviour}

While we are quite certain of the location of the dominant singularity of $W(z;1,1;a)$ for all $a\geq 0$, 
the complicated nature of the generating functions has prevented us from obtaining a rigorous proof. 
The results of this section are thus stated as conjectures. They are followed at the end of the section by an outline of our reasoning.

\begin{conj}\label{conj:p:square_domsing_fe}
For a given $a > 0$, the dominant singularity of $W(z;1,1;a)$ is located at
\[\tSt(a) = \begin{cases}\tStone  & a\leq 2 \\ \tSttwo(a) & a>2,\end{cases}\]
where $\tStone\approx 0.403032$ is a root of $1-2z -2z^2+2z^3=0$, and $\tSttwo(a)$ is a root of

\begin{equation}\label{eqn:pdw_fe}
1-a-a(1-a)z +az^2+a(1-a)z^3=0.
\end{equation}
We thus observe a first-order adsorption phase transition occuring at $a=2$, with a crossover exponent of~$\phi=1$. 

\end{conj}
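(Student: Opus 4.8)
The plan is to analyze the dominant singularity of $W(z;1,1;a) = R(1,1) + T(1,1) - T(0,1)$ by tracing through the explicit solutions of Theorem~\ref{thm:square_genfunc_solution}, with the key observation that all the building blocks ($M$, $L$, $A$, $B$, $C$, $\mathcal{H}$, $\mathcal{I}$, $\Lambda$, $\lambda$) are themselves explicit algebraic functions of $z$ and $a$. First I would catalogue the candidate singularities: (i) branch points of the various square roots, most importantly $\Lambda = \lambda(1)$, whose radicand $1-2z-z^2-z^4+2z^5+z^6$ has its smallest positive root at some $z=\rho_\Lambda$ independent of $a$; (ii) zeros of the denominators $M(1,1)$, $L(1,1)$, and the denominators appearing in $A,B,C,\mathcal{H},\mathcal{I},\mathcal{J}$, many of which do depend on $a$; and (iii) the convergence boundary of the infinite sum $R(u,0)=\sum_n \mathcal{H}(u\Lambda^{2n})\prod_{k<n}\mathcal{I}(u\Lambda^{2k})$ from Proposition~\ref{prop:Ru0_soln}, which converges as long as $|\Lambda|<1$ and the factors $\mathcal{I}$ stay bounded below $1$ in modulus. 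The claim that the phase boundary is $a=2$ with the two given polynomials strongly suggests that for $a\le 2$ the relevant singularity is the $a$-independent root $\tStone$ of $1-2z-2z^2+2z^3=0$ — presumably a zero of one of the $a$-free denominator factors such as $(1-(1-z-z^2+\Lambda)\lambda(q\Lambda))$ evaluated at the appropriate point, or of $1-z\Lambda$ — while for $a>2$ a moving pole coming from $\mathcal{J}(a)=1+\Lambda-z\Lambda-z^2\Lambda-\Lambda a + z^2\Lambda a=0$ or from $1-z\lambda(v)a$ overtakes it.

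The core computation is therefore: (1) simplify $W(z;1,1;a)$ enough to identify which denominator factor is responsible for the dominant singularity in each regime; (2) show that setting that factor to zero, after substituting the explicit forms of $\Lambda$ and $\lambda$, yields exactly $1-2z-2z^2+2z^3=0$ in the desorbed case and equation~\eqref{eqn:pdw_fe} in the adsorbed case; and (3) verify the crossover: that at $a=2$ the two curves $\tStone$ and $\tSttwo(a)$ meet, i.e. $z=\tStone$ satisfies~\eqref{eqn:pdw_fe} at $a=2$, and that for $a<2$ one has $\tSttwo(a)>\tStone$ (so it is not dominant) while for $a>2$ one has $\tSttwo(a)<\tStone$. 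Along the way I would need to check that $\tStone<\rho_\Lambda$ and $\tStone$ is less than the other $a$-free branch points, so that the algebraic singularity from $\Lambda$ (which would govern an expansion with exponent $1/2$, not a simple pole) is never the closest one — this is what makes the transition first-order (a pole colliding with a pole, giving crossover exponent $\phi=1$) rather than the usual $\phi=1/2$ square-root scenario of directed models. The free energy statement $f(\alpha)=-\log z_c(a)$ and the first-order nature then follow from~\eqref{eqn:fe_domsing} and the standard transfer from simple-pole singularity to exponential coefficient growth in analytic combinatorics~\cite{Flajolet2009Analytic}, together with the fact that $\rho(\alpha)=-\,\mathrm{d}\log z_c/\mathrm{d}\alpha = -\,(a/z_c)\,\mathrm{d}z_c/\mathrm{d}a$ jumps from $0$ (constant $z_c=\tStone$ for $a<2$) to a strictly positive value at $a=2^+$.

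The main obstacle — and the reason the authors state this as a conjecture rather than a theorem — is step (1): rigorously controlling the infinite sum defining $R(u,0)$ near its radius of convergence. Formal-power-series convergence (as proved in Proposition~\ref{prop:Ru0_soln}) is not enough; one needs analytic convergence on a disc whose radius is exactly $z_c(a)$, which requires uniform estimates on $\mathcal{H}(u\Lambda^{2n})$ and $\prod \mathcal{I}(u\Lambda^{2k})$ as $z\to z_c$ and $n\to\infty$ simultaneously, and then a proof that no cancellation occurs — i.e. that the singular factor does not vanish identically against a zero of the numerator. A secondary difficulty is bookkeeping: $W(1,1)$ is a sum of several terms each with its own denominators, and one must rule out the possibility that a pole of one term is cancelled by another term (the subtraction of $T(0,v)$ is exactly the kind of cancellation that does happen, so one cannot be cavalier). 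My expectation is that one can make the argument fully rigorous for $a$ away from the transition point by these estimates, and that the residual difficulty is entirely at $a=2$ where the two singularities coalesce and one must track the confluence carefully; this is presumably why the honest move is to present Conjecture~\ref{conj:p:square_domsing_fe} and support it numerically plus with the heuristic outline promised at the end of the section.
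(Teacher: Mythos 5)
Your plan tracks the paper's own (explicitly non-rigorous) reasoning quite closely: the authors likewise identify the desorbed-phase singularity as an $a$-independent simple pole of the explicit solution, the adsorbed-phase singularity as a moving zero of $\mathcal{J}(a)$, and the central unresolved obstacle as upgrading the formal-power-series convergence of the iterated-kernel sum~\eqref{eqn:Ru0_soln} to analytic convergence on the full disc $|z|<\tSt(a)$. Two points of your proposal diverge from what the paper actually says, and both matter. First, you attribute the root of $1-2z-2z^2+2z^3=0$ to an $a$-free factor such as $1-(1-z-z^2+\Lambda)\lambda(q\Lambda)$ or $1-z\Lambda$; the paper instead locates $\tStone$ as a root of $C(z;\Lambda,1)=0$, i.e.\ the denominator of the closed form for $T(z,v)$ in Theorem~\ref{thm:square_genfunc_solution}, so the pole enters through the $T(z,v)$ term that feeds into both $R(1,1)$ and $T(1,1)$. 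The factor $1-(1-z-z^2+\Lambda)\lambda(u\Lambda)$ that you name is the one the paper associates with the \emph{loops} singularity $\tSlone$ (the degree-ten polynomial of Conjecture~\ref{conj:p:square_loops_domsing}), so conflating the two would send your step (2) after the wrong polynomial.

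Second, and more substantively, you locate the residual difficulty at the confluence point $a=2$ and expect the argument to go through rigorously for $a$ bounded away from the transition. The paper's stated obstruction is different and worse: because $\mathcal{J}(a)^{-n}$ appears in the $n$-th summand of~\eqref{eqn:Ru0_soln}, the individual terms grow without bound and the sum fails to be absolutely-uniformly convergent on $|z|<\tSt(a)$ for general $a$, not merely near $a=2$. Convergence can be established on a strictly smaller disc (e.g.\ $|z|<\tSt(ca)$ for large $c$), after which a resummation is needed to continue analytically outward, and it is this extension --- at essentially every $a$ --- that the authors could not carry out. So the estimate-plus-no-cancellation programme you sketch in your step (1) is precisely where the proof breaks down, but you should not expect it to succeed off the critical point either; a genuinely new idea (a resummation of the divergent-looking series, or an alternative representation of $R(u,0)$) is needed before any value of $a$ near the transition is accessible. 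Your observations about the transversal pole--pole crossing producing $\phi=1$, and the consistency check that $\tStone$ satisfies~\eqref{eqn:pdw_fe} at $a=2$, are sound and consistent with the paper's qualitative discussion in the section on the order of the transition.
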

See Figure~\ref{fig:p:square_fe_density} for plots of the dominant singularity and surface density, as calculated using~\eqref{eqn:density} and~\eqref{eqn:fe_domsing}. It is of interest to note that the equation which determines the free energy in the adsorbed state, namely~\eqref{eqn:pdw_fe}, is the same for prudent tails and partially directed walks~\cite{Whittington1999Directedwalk}.

Series analysis of $W(z;1,1;a)$ for a variety of $a$ values confirms the validity of this conjecture -- see Figure~\ref{fig:exact_vs_estimates} for a comparison of the free energy and corresponding numerical estimates.
We point out that the dominant singularity in the desorbed phase, $\tStone\approx 0.403032$, is the same as for 
2-sided prudent walks \emph{without} a boundary~\cite{BousquetMelou2010Families}.

\begin{figure}[h!]
\centering
\begin{subfigure}[b]{0.8\textwidth}
\centering
  \begin{picture}(300,215)
    \put(0,15){\includegraphics[width=300pt]{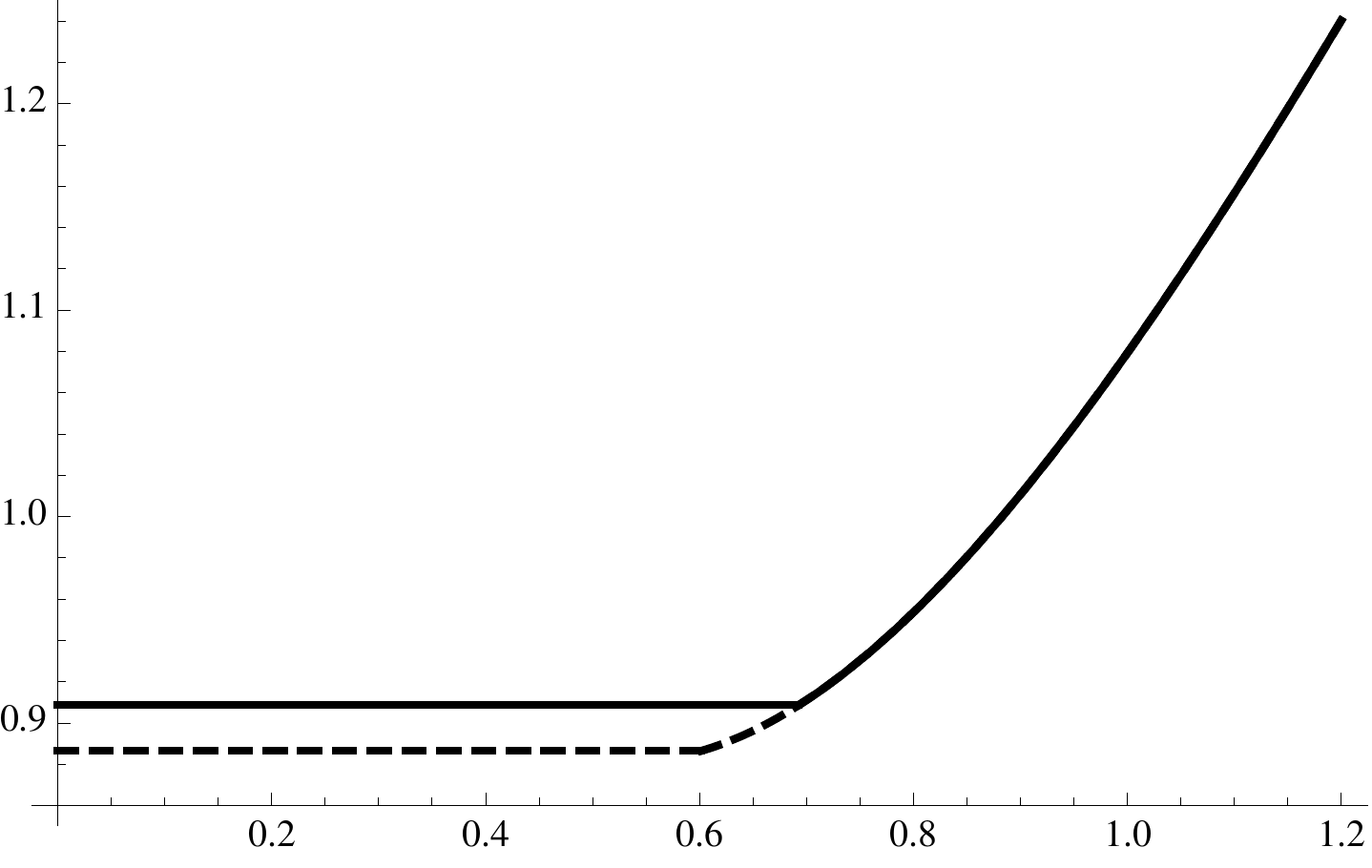}}
    \put(150,5){$\alpha$}
    \put(-20,115){$f(\alpha)$}
 \end{picture}
 \caption{}
 \end{subfigure}
\\
\vspace{1cm}
\begin{subfigure}[b]{0.8\textwidth}
\centering
  \begin{picture}(300,215)
    \put(0,15){\includegraphics[width=300pt]{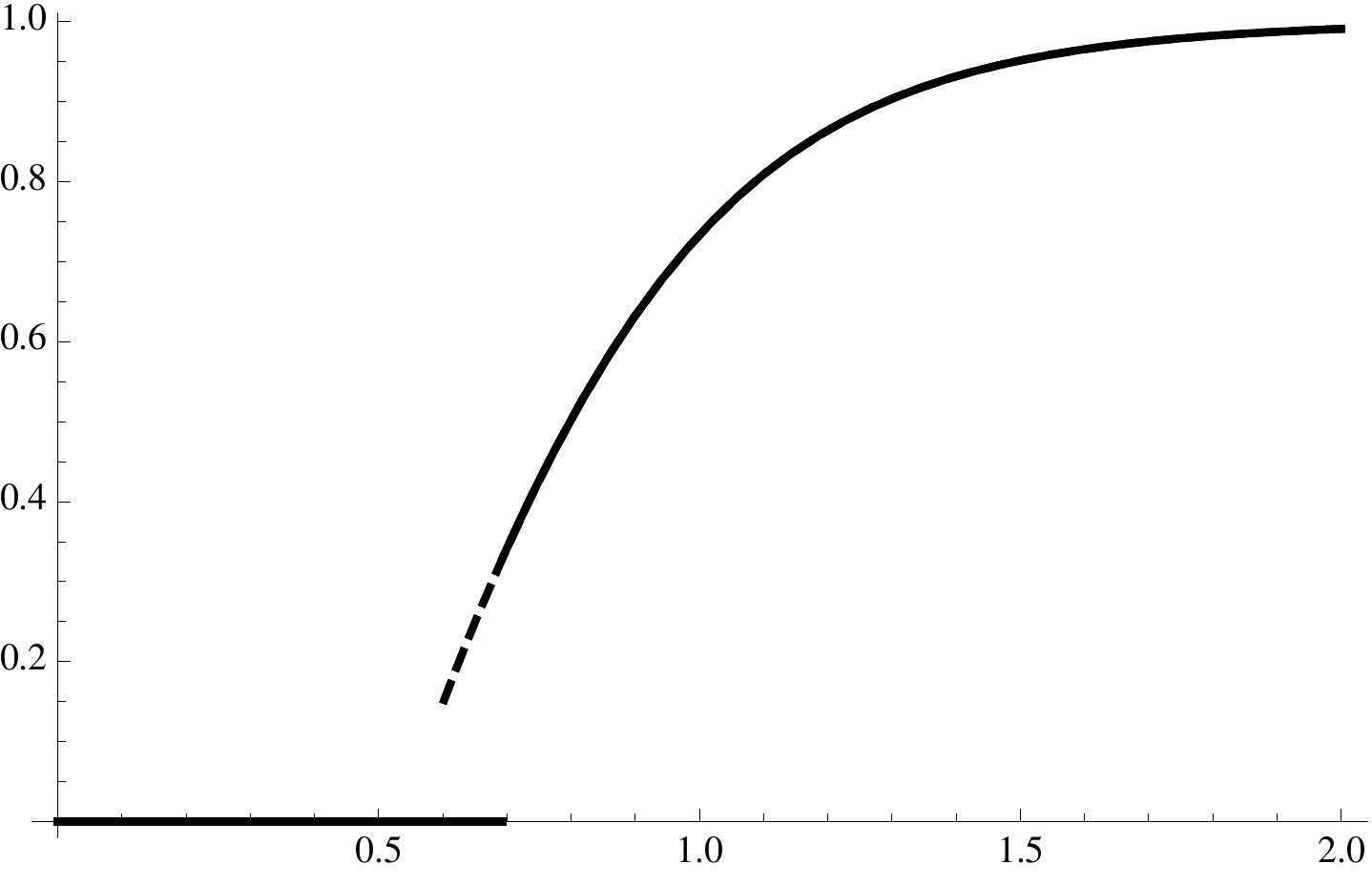}}
    \put(150,5){$\alpha$}
    \put(-20,115){$\rho(\alpha)$}
  \end{picture}
  \caption{}
  \end{subfigure}
\caption[Plots of the free energy and surface density of 2-sided prudent walks and loops.]
{Top: The (conjectured) free energy of 2-sided prudent walks (solid) and loops (dashed) as functions of the 
surface fugacity $\alpha = \log a$. 
Bottom: The (conjectured) density of edges in the surface for 2-sided prudent walks (solid) and loops (dashed). 
Note the discontinuity in the surface density for both models, corresponding to a first-order adsorption transition.}
\label{fig:p:square_fe_density}
\end{figure}

\begin{figure}
\centering
\begin{picture}(300,210)
    \put(0,10){\includegraphics[width=300pt]{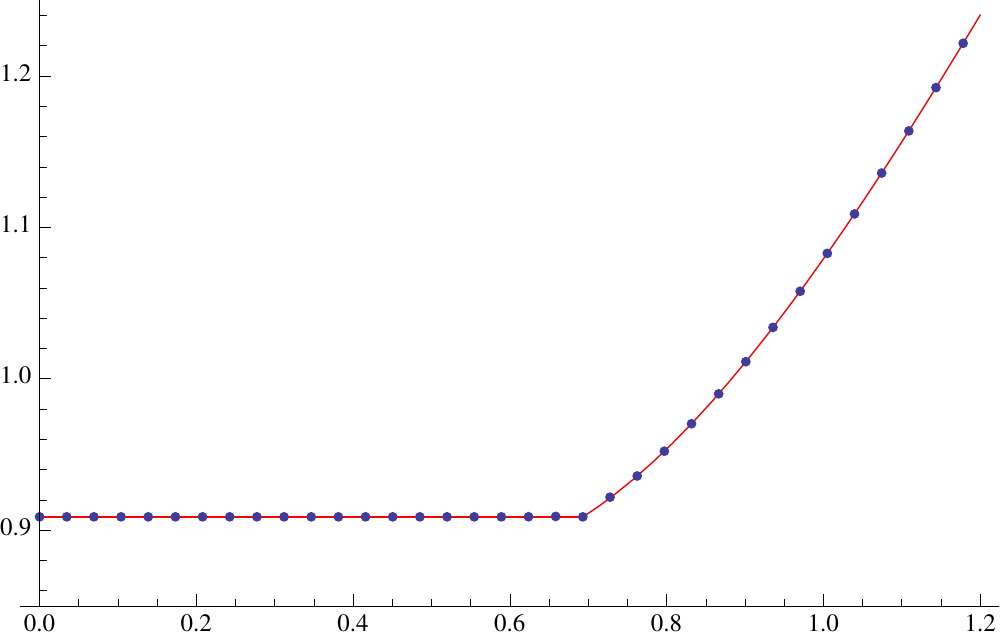}}
    \put(150,0){$\alpha$}
    \put(-20,110){$f(\alpha)$}
 \end{picture}
\caption{A plot of the conjectured free energy $f(\alpha)$ (the red line) along with numerical estimates (the blue dots) computed using series analysis of walks up to length 400.}
\label{fig:exact_vs_estimates}
\end{figure}

\begin{conj}\label{conj:p:square_loops_domsing}
For a given $a>0$, the dominant singularity of $W(z;1,0;a)$ is located at
\[\tSl(a) = \begin{cases} \tSlone  & a\leq \aSl\approx 1.82476 \\ \tSltwo(a) & a>\aSl,\end{cases}\]
where $\tSlone\approx 0.412095$ is a root of
\[1-3z-z^2+6z^3-7z^7-z^8+3z^9+z^{10}=0,\]
$\aSl$ is a root of
\[1-7a+45a^2-143a^3+277a^4-346a^5+285a^6-155a^7+54a^8-11a^9+a^{10}=0,\]
and $\tSltwo(a) = \tSttwo(a)$ as defined in Conjecture~\ref{conj:p:square_domsing_fe}. This implies that in the absorbed state, the free energy for prudent tails, prudent loops and partially directed walks is determined by the same equation, namely~\eqref{eqn:pdw_fe}. We thus observe a first-order adsorption transition occurring at 
$a=\aSl$ and a crossover exponent of $\phi=1$.

\end{conj}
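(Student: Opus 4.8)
The plan is to run, for the loop generating function $W(z;1,0;a)$, the same analysis that underlies Conjecture~\ref{conj:p:square_domsing_fe}. The first step is to produce an explicit formula for $W(z;1,0;a)$ by specialising Theorem~\ref{thm:square_genfunc_solution} at $u=1$, $v=0$. The one point requiring care is that $\lambda(0)=z$, so the quotients $\tfrac{z^2v}{\lambda(v)-z}$, $C(v\Lambda,v)$, etc.\ appearing in the formulas for $T(u,v)$ and $T(z,v)$ are of the form $0/0$ as $v\to0$; expanding $\lambda(v)=z+O(v)$ shows they are all removable, and the limit of $R(v\Lambda,0)$ as $v\to0$ is $R(0,0)=\mathcal H(0)/(1-\mathcal I(0))$ by \eqref{eqn:itker_cancelT}. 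The upshot is a closed expression for $W(z;1,0;a)$ built from $\Lambda=\lambda(1)$, from $\lambda$ evaluated at arguments of the form $q\Lambda$, and from the kernel-method series $R(u,0)$ of \eqref{eqn:Ru0_soln} at $u=1$ and $u=0$.

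The second step is to enumerate the candidate singularities and, since $W(z;1,0;a)$ has non-negative coefficients (so its dominant singularity is real and positive by Pringsheim), to track only the smallest positive real one as a function of $a$. The candidates fall into three groups: (i) the branch points of $\Lambda$, i.e.\ roots of $1-2z-z^2-z^4+2z^5+z^6$, together with branch points of the nested radicals $\lambda(q\Lambda)$ inside $\mathcal H$ and $\mathcal I$ at the arguments $q=\Lambda^{2k}$ (from $u=1$) and $q=0$ actually reached by the iteration; (ii) zeros of the explicit denominators created by the specialisation, namely $1-za$ and $1-z^2a$ from $T$, the factor $1-\mathcal I(z;0;a)$ from $R(0,0)$, and $\mathcal J(a)$, $1-z\Lambda$, $1-za\lambda(q\Lambda)$, $1-(1-z-z^2+\Lambda)\lambda(q\Lambda)$ from $\mathcal H,\mathcal I$; and (iii) the radius of convergence of the sum \eqref{eqn:Ru0_soln}, which, since $\Lambda^{2k}\to0$, is controlled by the limit factor $\mathcal I(z;0;a)$. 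For each candidate one eliminates $\Lambda$ (via $z\Lambda^2-\Lambda(1-z+z^2+z^3)+z=0$) and the radicals $\lambda(q\Lambda)$ by resultants, producing a polynomial in $z$ (and $a$). I would expect the $a$-independent candidate that is minimal for small $a$ — coming from the analyticity requirement on $\mathcal H,\mathcal I$ rather than from series divergence, which only happens further out — to reduce after this elimination to $1-3z-z^2+6z^3-7z^7-z^8+3z^9+z^{10}$, identifying $\tSlone\approx0.412095$; degree $10$ is consistent with composing the quadratic for $\Lambda$ with the quadratic radical $\lambda(q\Lambda)$.

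For large $a$ the dominant singularity should be $\tSttwo(a)$, the root of \eqref{eqn:pdw_fe}. The heuristic is that a strongly adsorbed $2$-sided prudent loop consists almost entirely of a partially-directed walk running along the surface, with only bounded prudent excursions upward, so $W(z;1,0;a)$ inherits the partially-directed-walk singularity in this regime — consistent with the already-noted coincidence of the adsorbed free energies of tails, loops and partially-directed walks. Concretely this means exhibiting, among the denominators of the specialised formula (after using $\Lambda$'s quadratic to clear radicals), the algebraic factor whose vanishing is \eqref{eqn:pdw_fe}, and checking that it becomes the smallest positive singularity exactly once $a>\aSl$. The crossover is then pinned down by imposing $\tSttwo(\aSl)=\tSlone$: substituting a root of the degree-$10$ polynomial in $z$ into \eqref{eqn:pdw_fe} and taking $\operatorname{Res}_z$ should yield $1-7a+45a^2-143a^3+277a^4-346a^5+285a^6-155a^7+54a^8-11a^9+a^{10}$, whose relevant root is $\aSl\approx1.82476$ (a quick numerical check gives $\tSttwo(\aSl)\approx\tSlone\approx0.412$, consistent with continuity of $\tSl(a)$). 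A short monotonicity argument then confirms $\tSttwo(a)<\tSlone$ precisely for $a>\aSl$ and that no candidate from (i)–(iii) undercuts $\min\{\tSlone,\tSttwo(a)\}$ for any $a>0$. The first-order transition and $\phi=1$ follow from \eqref{eqn:density}–\eqref{eqn:fe_domsing} exactly as for Conjecture~\ref{conj:p:square_domsing_fe}.

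The main obstacle — and the reason the statement is offered as a conjecture — is making groups (ii)–(iii) rigorous: proving that the series \eqref{eqn:Ru0_soln}, and the companion expression $R(0,0)=\mathcal H(0)/(1-\mathcal I(0))$, is genuinely analytic throughout $|z|<\tSl(a)$ and genuinely singular at the claimed point, rather than merely collecting the finite candidate list above. The trouble is the nested radical $\lambda(q\Lambda)$ with $\Lambda$ itself a radical in $z$: controlling the domains of analyticity of $\mathcal H$ and $\mathcal I$, ruling out spurious branch points of smaller modulus, and bounding the tail $\sum_{n\ge N}\mathcal H(u\Lambda^{2n})\prod_{k<n}\mathcal I(u\Lambda^{2k})$ uniformly on the circle $|z|=\tSl(a)$ — which is needed to establish dominance, not just to locate the first real singularity — are all delicate estimates. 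A secondary, in-principle-finite but heavy obstacle is verifying that none of the denominator factors in group (ii) vanishes inside the disk for the relevant range of $a$. If the contour/transfer estimates standard for the kernel method can be combined with sufficiently sharp bounds on $\mathcal I$ near its fixed point, the identifications of $\tSlone$, $\tSttwo(a)$ and $\aSl$ above upgrade Conjecture~\ref{conj:p:square_loops_domsing} to a theorem.
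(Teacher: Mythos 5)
Your proposal follows essentially the same route as the paper's own (non-rigorous) reasoning behind this conjecture: the desorbed singularity $z_1^{\ell}\approx 0.412095$ arises from a denominator factor of $\mathcal H$ and $\mathcal I$ in \eqref{eqn:Ru0_soln} (the paper identifies it as the zero of $1-(1-z-z^2+\Lambda)\lambda(u\Lambda)$ at $u=1$, which after eliminating the radicals gives the degree-$10$ polynomial), the adsorbed singularity comes from $\mathcal J(a)=0$, which is equivalent to \eqref{eqn:pdw_fe} once $\Lambda$ is eliminated via its quadratic, and the crossover $a_c^{\ell}$ is obtained by equating the two via a resultant. You also correctly name the same obstruction the authors cite for leaving this as a conjecture rather than a theorem — the failure of absolute-uniform convergence of \eqref{eqn:Ru0_soln} up to the claimed radius, caused by the $\mathcal J(a)^{-n}$ factors accumulating in the summands — so your outline is consistent with the paper and, like the paper, stops short of a proof for exactly the same reason.
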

(See Figure~\ref{fig:p:square_fe_density} for plots of the free energy and surface density.) 

The dominant singularity of $W(z;1,1;a)$ in the desorbed phase, namely $\tStone$, appears as a pole in the 
$T(z;z,1;a)$ term, which appears in both $R(z;1,1;a)$ and $T(z;1,1;a)$. More specifically, $\tStone$ is a root of 
$C(z;\Lambda,1)=0$. The dominant singularity in the adsorbed phase appears as a singularity of $R(z;1,0;a)$ and 
$R(z;\Lambda,0;a)$. In fact, $\tSttwo(a)$ is a root (in the variable $z$) of $\mathcal{J}(a)=0$ (as defined in 
Proposition~\ref{prop:Ru0_soln}), which in turn makes it a root of $B(z;1,\Lambda;a)=0$ (and in fact a root of 
$B(z;\Lambda^{n},\Lambda^{n+1};a)=0$ for $n\geq0$), and thus a pole of $\mathcal{H}(z;\Lambda^n;a)$ and 
$\mathcal{I}(z;\Lambda^n;a)$ for all $n\geq0$.

The trouble here lies not in \emph{locating} the singularity given in Conjecture~\ref{conj:p:square_domsing_fe}, 
but rather in \emph{proving} that $W(z;1,1;a)$ is analytic for $|z|<\tSt(a)$. The crux of the problem is that 
$\mathcal{J}(a)^{-n}$ appears in the summands of~\eqref{eqn:Ru0_soln}, and this means that $R(1,0)$, as it is defined there, 
is not \emph{absolutely-uniformly convergent} for $|z|<\tSt(a)$. 

This property can be satisfied for small 
$z$ (for example, taking $|z|<\tSt(c a)$ for a sufficiently large constant $c$), allowing for a resummation\footnote{We refer interested readers to~\cite[pp. 2274-5]{Beaton2011Enumeration}, where a similar problem was 
encountered in the enumeration of 3-sided prudent polygons by area. In that case, however, the generating function was a 
sum of \emph{rational} functions of a \emph{single variable}, and is thus much simpler than the problem in this case.}, 
but unfortunately it is unclear how to extend this result to $|z|<\tSt(a)$.

\section{Order of the phase transition}

Even though Conjectures~\ref{conj:p:square_domsing_fe} and~\ref{conj:p:square_loops_domsing} remain unproven at this point, 
the results of the preceding sections are satisfying: we have defined a new model of polymer adsorption, 
derived functional equations satisfied by corresponding generating functions, solved those functional equations, 
and determined the free energy of the model in the limit of infinitely long polymers.  This was the initial goal of 
this paper: to find a model more general than directed or partially directed walks, which is nevertheless still solvable. 
This models might then, in some sense, be `closer' to SAWs than previously-solved models. We now have the scheme
\begin{equation}
\label{eqn:model_scheme} 
\text{fully directed walks} \subset \text{partially directed walks} \subset \text{prudent walks} \subset \text{SAWs.} 
\end{equation}

We note here that all the existing models discussed in Section~\ref{ssec:exactly_solved} have second-order adsorption 
transitions~\cite{Privman1989Directed, vanRensburg1999Adsorbing, vanRensburg2010Adsorbing}. 
Moreover, the general SAW model in two dimensions is also conjectured~\cite{vanRensburg2004Multiple} 
to exhibit a second-order transition. Where it is known exactly, the order of these 
transitions is of second-order for both loops and tails, irrespective of the location of the endpoint of the walks.

Given~\eqref{eqn:model_scheme}, the nature of the adsorption phase transition for prudent walks (as described in 
Conjectures~\ref{conj:p:square_domsing_fe} and~\ref{conj:p:square_loops_domsing}) may at first seem puzzling. Prudent walks, which undergo a first-order transition, are both a subset and a superset of models which undergo second-order transitions.

A qualititative way of understanding this is to consider the relative location of the singularities for partially directed walks and the two prudent models. In the desorbed state, the dominant singularities are given by $z_1^{t}<z_1^{\ell}<z_1^{PDW}=\sqrt{2}-1$.  On the other hand, in the adsorbed state, the equation determining the free energy is given by the solution of the same equation for all three models, namely~\eqref{eqn:pdw_fe}.  Combining this with the fact that partially directed walks undergo a second order transition necessarily implies that the two prudent walk models must undergo a first order transition at their respective critical adsorption values.

The following (rough) conjecture is another approach to understanding this feature.

\begin{conj}\label{conj:transitions_height}
For a given (infinite) subclass $\mathcal{C}$ of self-avoiding walks in an upper half-plane with starting point on the surface, 
let $\omega\in\mathcal{C}$ and define $h(\omega)$ to be the maximum height above the surface reached by $\omega$. 
Then let $\langle h_n \rangle$ be the average of $h(\omega)$ over all walks $\omega\in\mathcal{C}$ of length $n$. 
If walks in $\mathcal{C}$, equipped with a surface interaction associated with steps along the surface, 
exhibit an adsorption phase transition, then the transition will be first-order if $\langle h_n \rangle = \Theta(n)$, 
and second-order if $\langle h_n \rangle = O(n^{3/4})$.
\end{conj}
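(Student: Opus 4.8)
The plan is to reduce the order of the adsorption transition for a class $\mathcal{C}$ to a single analyticity question, and then to express that question in terms of the height statistics. The first step is a renewal-type decomposition of a walk $\omega\in\mathcal{C}$ (with its initial vertex, and only its initial vertex, on the surface) into an alternating sequence of \emph{surface edges} --- each carrying weight $za$ --- and \emph{excursions} --- maximal subwalks of length at least $2$ whose interior vertices lie strictly above the surface --- followed by one final, never-returning excursion. An excursion of length $\ge 2$ has no vertex other than its two endpoints on the surface, hence carries no $a$-weight; so, writing $E(z)$ for the generating function of a single excursion, $E_\infty(z)$ for that of a never-returning subwalk, $c_1$ for the number of admissible surface-parallel steps, and $G(z,a)=c_1za+E(z)$, one has \emph{heuristically}
\[
Z_{\mathcal{C}}(z,a)\;\approx\;\frac{E_\infty(z)}{1-G(z,a)}.
\]
In the desorbed phase the dominant singularity $z_1$ is the radius of convergence of $E_\infty$, and is $a$-independent --- consistent with $f(a)=\log\mu$ there; the transition sits at the $a_c$ with $G(z_1,a_c)=1$ (the hypothesis that $\mathcal{C}$ adsorbs at all already forces $E(z_1)<\infty$); and for $a>a_c$ the singularity $z_c(a)<z_1$ solves $G(z_c(a),a)=1$. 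Implicit differentiation of that relation at $a_c$ gives
\[
\rho(a_c^+)=\lim_{a\downarrow a_c}\frac{\partial f}{\partial\alpha}=\frac{a_c\,\partial_aG(z_1,a_c)}{z_1\,\partial_zG(z_1,a_c)}=\frac{c_1\,a_c}{c_1\,a_c+E'(z_1)},
\]
so the transition is \textbf{first order} --- a jump in $\rho$, with crossover exponent $\phi=1$ --- exactly when $E'(z_1)<\infty$, and \textbf{second order} --- $\rho$ continuous, $\phi<1$ --- exactly when $E'(z_1)=\infty$.

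The second step identifies $E'(z_1)<\infty$ with ballistic height growth. Writing $g_\ell$ for the number of length-$\ell$ excursions, $E'(z_1)=\sum_{\ell\ge2}\ell\,g_\ell\,z_1^{\ell}$, and two regimes arise. If the typical height of a length-$n$ walk in $\mathcal{C}$ is $\Theta(n)$, then forcing a subwalk to return to the surface costs a genuine exponential factor --- one fights a linear drift --- so $g_\ell$ grows at an exponential rate strictly smaller than $1/z_1$, the series converges, and the transition is first order. If instead $\langle h_n\rangle=O(n^{3/4})$ --- the self-avoiding-walk value $n^\nu$ with $\nu=3/4$ in two dimensions --- then no exponential drift is fought, so $g_\ell\asymp z_1^{-\ell}\ell^{-\tau}$, and a hyperscaling heuristic relating the excursion-length exponent $\tau$ to the exponent of $\langle h_n\rangle$ predicts $\tau\le2$ in this range (for diffusive classes $\tau=3/2$; for genuine SAWs $\tau=1-\gamma_{11}\in(1,2)$, with $\gamma_{11}$ the two-boundary-endpoint exponent), whence $E'(z_1)=\sum_\ell\ell^{1-\tau}=\infty$ and the transition is second order. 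One then checks the two ends of the dichotomy against the exactly solved models: partially directed walks and Dyck/Motzkin paths have $\langle h_n\rangle=\Theta(n^{1/2})=O(n^{3/4})$ and known second-order transitions, while $2$-sided prudent walks drift ballistically, $\langle h_n\rangle=\Theta(n)$, and Conjectures~\ref{conj:p:square_domsing_fe} and~\ref{conj:p:square_loops_domsing} give first-order transitions.

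The hard part --- and the reason this can only be a conjecture for general $\mathcal{C}$ --- is that the renewal decomposition above is not exact for an arbitrary subclass of self-avoiding walks: successive excursions interact through the self-avoidance constraint, the excursion class need not be closed under concatenation, and the factorization $Z_{\mathcal{C}}\approx E_\infty/(1-G)$, together with the identification of the desorbed singularity with the radius of convergence of $E_\infty$, requires a pattern-theorem/submultiplicativity apparatus tailored to $\mathcal{C}$ to be made rigorous up to $\mu^{o(n)}$ corrections. For the directed classes this structure is genuinely present (so there the conjecture is a theorem), and for the prudent models the functional equations of Lemma~\ref{lem_square_original_func_eqns} take its place; but for general $\mathcal{C}$, and in particular for SAWs themselves, the remaining input --- that the exponent of $\langle h_n\rangle$ controls $\tau$, a form of hyperscaling --- is open. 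A secondary gap is the range of height exponents strictly between $n^{3/4}$ and $n^{1}$, on which the statement is silent: the natural sharpening is that the transition stays second order whenever $\langle h_n\rangle=o(n)$ and turns first order precisely when $\langle h_n\rangle=\Theta(n)$, and proving even this dichotomy --- equivalently, determining exactly when $E'(z_1)$ diverges --- is where the real obstruction lies.
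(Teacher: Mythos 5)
The statement you are addressing is presented in the paper as a conjecture, and the paper offers no proof of it --- only a qualitative heuristic (a walk with $\langle h_n\rangle=\Theta(n)$ must undergo a sudden geometric rearrangement at adsorption, whereas one with $\langle h_n\rangle=O(n^{3/4})$ need not) together with a consistency check against the exactly solved directed models and the two prudent models. Your proposal is likewise a heuristic rather than a proof, and you say so explicitly; within that shared limitation, your route is genuinely different and more mechanistic. Where the paper argues directly from the geometry of typical configurations, you pass through a renewal\slash excursion decomposition, reduce the order of the transition to the single analytic question of whether $E'(z_1)$ converges, and then tie that convergence to the height exponent: ballistic height forces exponential suppression of returning excursions (hence $E'(z_1)<\infty$ and a jump in $\rho$, with an explicit formula for the jump), while diffusive or SAW-like height gives only polynomial suppression $g_\ell\asymp z_1^{-\ell}\ell^{-\tau}$ with $\tau\le 2$ (hence $E'(z_1)=\infty$ and a continuous $\rho$). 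This buys something the paper's argument does not: it explains \emph{why} the directed models, where the decomposition is exact, are genuinely second order, and it makes the two unproved inputs visible and sharp --- the renewal factorization is not exact for a general subclass of self-avoiding walks because successive excursions interact through self-avoidance, and the hyperscaling link between the height exponent and the excursion exponent $\tau$ is itself conjectural. You also correctly flag the silent range of height exponents strictly between $n^{3/4}$ and $n$, which the paper acknowledges as well. In short: there is no gap relative to the paper, since the paper proves nothing here either; your mechanism is a legitimate and arguably sharper alternative to the paper's intuition, and it correctly locates the obstructions to turning the conjecture into a theorem.
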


The exponent $3/4$ is widely expected~\cite{Nienhuis1982Exact} to be the value $\nu$ characterising the size of 
self-avoiding walks in two dimensions. At present there are no known models with an exponent $\nu$ strictly between 
$3/4$ and 1, and so we do not attempt to cover this range in the conjecture.

The intuitive reasoning behind this conjecture is as follows. A first-order phase transition characterises a sudden, 
violent change -- a system on one side of the transition point is markedly different from one the other. 
Second-order transitions, while still characterising a change, are smoother, and less sudden.

If a given walk model has $\langle h_n \rangle = \Theta(n)$, then when desorbed, the bulk of an average walk must be 
far removed from the surface. But when adsorption takes place, a significant portion of the walk must now be very close 
to the surface. This dichotomy between the desorbed and adsorbed regimes suggests a sudden, noticeable change must take 
place at the critical point, implying a first-order transition.

On the other hand, when $\langle h_n \rangle = O(n^{3/4})$, the distance between the bulk of a walk and the surface 
is small, relative to the length of the walk. Thus the difference between the adsorbed and desorbed phases 
(at least, close to the critical point) is less dramatic, suggesting a second-order transition.

\subsection{Directed models}\label{sec:directed_models_heights}

In this section we briefly consider Conjecture~\ref{conj:transitions_height} in the context of existing, directed models. 
We will discuss four such models, all restricted to the upper half-plane, with an impenetrable surface in the horizontal line 
$y=0$.
\begin{itemize}
\item NE-directed walks, i.e. walks on the square lattice consisting only of north and east steps;
\item Dyck paths, i.e. walks on a $45^\circ$ rotated square lattice (or, just as easily, the triangular lattice) 
consisting of north-east and south-east steps;
\item Motzkin paths, i.e. walks on the triangular lattice consisting of north-east, east, and south-east steps; and
\item partially directed walks, i.e. walks on the square lattice consisting of north, east and south steps.
\end{itemize}
In the literature, the terms \emph{Dyck path} and \emph{Motzkin path} are usually used to refer to walks which start 
\emph{and end} on the surface. To avoid confusion we will instead use the terms \emph{loops} and \emph{tails} to refer to 
walks which end on the surface or at any height, respectively.

In the following table we summarise the results for the asymptotics of the average height $\langle h_n \rangle$ of walks of 
length $n$, as well as the order of the adsorption phase transition. (Note that for models which allow steps parallel to 
the surface, one can use edge- or vertex-weights on the surface, but this does not affect the order of the phase transition.) 
The calculations for NE-directed walks are trivial, and the average heights given there are exact for all $n$, not just in the 
large $n$ limit. The numerical estimates are ours, and are derived from analysis of long (4000 or 5000 terms) series.

\begin{center}
\begin{tabular}{| p{0.19\textwidth} || p{0.36\textwidth} | p{0.36\textwidth} |}
\hline
{\bf Model} & {\bf Loops} & {\bf Tails} \\
\hline \hline
NE-directed & 0 \newline no phase transition & $n/2$ \newline first-order \\
\hline
Dyck~\cite{deBruijn1972Average, Kemp1981Average, vanRensburg1999Adsorbing} & $\sqrt{\frac{\pi}{2}}n^{1/2} - \frac{3}{2} + O\left(n^{-1/2}\right)$ \newline second-order & $\left(\sqrt{2\pi}\log 2\right)n^{1/2} -\frac{3}{2} + O(n^{-1/2})$ \newline second-order \\
\hline
Motzkin~\cite{Prodinger1980Average, vanRensburg2010Adsorbing} & $\sqrt{\frac{\pi}{3}}n^{1/2} - \frac{3}{2} + O\left(n^{-1/2}\right)$ \newline second-order & $1.418632\,n^{1/2} - 1.5000 + O(n^{-1/2})$ \newline second-order \\
\hline
partially directed \newline \cite{Privman1989Directed} & $1.376996\,n^{1/2} - 2.91422 + O(n^{-1/2})$ \newline second-order & $1.908922\,n^{1/2} - 2.91422 + O(n^{-1/2})$ \newline second-order \\
\hline
\end{tabular}
\end{center}

We note that for Motzkin paths and partially directed walks, the ratio of the leading amplitudes of tails and loops seems to be 
the same as that of Dyck paths, namely $2\log 2 \approx 1.38229$. This leads us to conjecture that the average height 
$\langle h_n\rangle$ of Motzkin tails of length $n$ is asymptotically
\[\langle h_n \rangle = \left(2\sqrt{\frac{\pi}{3}}\log 2\right)n^{1/2} - \frac{3}{2} + O(n^{-1/2}).\]
If true, this can very likely be obtained in the same manner as the exact results given above.

More importantly to us, we see that all of these models (except the trivial case of NE-directed ``loops'', which display 
no transition at all) obey Conjecture~\ref{conj:transitions_height}. The single model with average height $\Theta(n)$, 
NE-directed tails, undergoes a first-order adsorption transition, while all the others, having average height 
$\Theta(n^{1/2})$, display second-order transitions.

\subsection{Prudent models}

We now return to prudent walks, and investigate the average height of walks of length $n$ and how this relates to the order of 
the adsorption phase transition. We begin with a brief outline of how the asymptotics of the average height 
$\langle h_n \rangle$ can be calculated.

Let $c_n(\lambda)$ be the number of walks in a class $\mathcal{C}$ which reach a maximum height $\lambda$ above the surface, 
and define the bivariate generating function 
\[C(z;v) = \sum_{n,\lambda} c_n(\lambda) z^n v^\lambda.\]
Say $C(z;1)$ has radius of convergence $\rho$. Then
\[\langle h_n\rangle = \frac{\sum_{\lambda} \lambda c_n(\lambda)}{\sum_{\lambda} c_n(\lambda)} = \frac{\frac{\partial}{\partial v}[z^n]C(z;v)|_{v=1}}{[z^n]C(z;1)} = \frac{[z^n]\frac{\partial}{\partial v}C(z;v)|_{v=1}}{[z^n]C(z;1)},\]
where the last equality follows from the uniform convergence of $C(z;v)$ on (at least) $(z,v) \in [0,\rho) \times [0,1]$. 
For all the models discussed in this article, we expect to find
\[[z^n]C(z;1) \sim An^{\gamma}\rho^{-n} \qquad \text{and}\qquad [z^n]\left.\frac{\partial}{\partial v}C(z;v)\right|_{v=1} \sim Bn^{\gamma'}\rho^{-n}\]
for constants $A,B,\gamma,\gamma'$, implying
\[\langle h_n \rangle = \Theta(n^{\gamma'-\gamma}).\]
As per Conjectures~\ref{conj:p:square_domsing_fe} and~\ref{conj:p:square_loops_domsing}, we expect that both tails and loops 
display first-order transitions. For tails, it is easier to consider the average height of the endpoints of walks, 
$\langle e_n \rangle$, rather than the maximum height. Note that since the maximum height of any walk is at least as great its endpoint, we have $\langle h_n \rangle \geq \langle e_n \rangle$. 

The variable $v$ in $W(z;1,v;1)$ tracks the endpoint height of walks. It thus appears that we must investigate the asymptotics of 
\[[z^n]\left.\frac{\partial}{\partial v}W(z;1,v;1)\right|_{v=1}.\]
However, we are fortunate in that $W$ satisfies a set of properties which allow us to sidestep this analysis. Theorem~IX.9 of~\cite{Flajolet2009Analytic} states that if a generating function $F(z;u)$ satisfies a certain \emph{meromorphic schema}, then the random variable $X_n$ with probability generating function
\[p_n(u) = \frac{[z^n]F(z,u)}{[z^n]F(z,1)}\]
has mean and standard deviation which are asymptotically $\Theta(n)$. The random variable of interest here is $e_n$, and it has probability generating function
\[p_n(v) = \frac{[z^n]W(z;1,v;1)}{[z^n]W(z;1,1;1)}.\]
We thus need to show that $W(z;1,v;1)$ satisfies the three conditions set out in Theorem~IX.9 of~\cite{Flajolet2009Analytic}. For completeness we state the details of that theorem in the Appendix.

 Condition (\emph{i}) amounts to showing that $W(z;1,v;1)$ has an isolated simple pole at $(z,v) = (z_1^t,1)$. 

One can factor out the singular term $C(z;v\Lambda,v)^{-1}$ from the
generating function and show that what remains is analytic on, for example, 
\[(z,v)\in\mathcal D^t = \{|z|\leq 0.405\}\times\{|v-1|<0.01\}.\] 
(Recall that the dominant singularity at $v=1$ is a simple pole at $\tStone\approx0.403$.) The most complicated part of this calculation involves showing that the infinite sum~\eqref{eqn:Ru0_soln} is absolutely-uniformly convergent on $\mathcal D^t$ for $u=1$ and for $u=v\Lambda$, but even this only requires one to derive simple bounds for the individual factors of $\mathcal H$ and $\mathcal I$.

Condition (\emph{ii}) is a straightforward derivative, and Condition (\emph{iii}) can be checked via implicit differentiation of $C(z;v\Lambda,v)$. It follows that $\langle e_n \rangle = \Theta(n)$, and thus the same is true of 
$\langle h_n \rangle$. 

For prudent loops, the height is tracked by the variable $u$ in $R(z;u,0;1)$. The dominant singularity at $u=1$ is 
again a simple pole, and so we are still able to use Theorem~IX.9 of~\cite{Flajolet2009Analytic}. The singular term 
$(1-(1-z-z^2+\Lambda)\lambda(u\Lambda))^{-1}$ can be factored out, and what remains can be shown to be analytic on, 
for example, 
\[(z,u) \in \mathcal D^\ell = \{|z|\leq 0.413\}\times\{|u-1|<0.001\}.\]
(The dominant pole is at $\tSlone\approx0.412$.) Conditions (\emph{ii}) and (\emph{iii}) are again straightforward. 
We thus have $\langle h_n \rangle = \Theta(n)$.

We summarise these results in the following lemma, which is given without proof.

\begin{lem}\label{lem:average_height}
The average endpoint height $\langle e_n\rangle$ and total height $\langle h_n \rangle$ of $n$-step 2-sided prudent tails in the upper half-plane are both asymptotically linear in $n$. The average total height $\langle h_n\rangle$ of $n$-step 2-sided prudent loops in the upper half-plane is also asymptotically linear in $n$.
\end{lem}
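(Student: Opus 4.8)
The plan is to establish Lemma~\ref{lem:average_height} by verifying, for each of the three generating functions $W(z;1,v;1)$, $R(z;u,0;1)$, and the relevant combination governing tail endpoints, that the hypotheses of Theorem~IX.9 of~\cite{Flajolet2009Analytic} (the meromorphic schema yielding asymptotically linear mean and standard deviation) are met. Since $\langle h_n\rangle \geq \langle e_n\rangle$ always, and since both are bounded above by $n$, it suffices to prove $\langle e_n\rangle = \Theta(n)$ for tails and $\langle h_n\rangle = \Theta(n)$ for loops; the upper bound $\langle h_n\rangle = O(n)$ is trivial. So the real content is the lower bound, which Theorem~IX.9 delivers once its three conditions are checked.

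First I would handle the tail case. The probability generating function of the endpoint height $e_n$ is $p_n(v) = [z^n]W(z;1,v;1)/[z^n]W(z;1,1;1)$, so I need $W(z;1,v;1)$ to fit the schema on a bidisc around $(z,v)=(\tStone,1)$. Using the explicit solution from Theorem~\ref{thm:square_genfunc_solution}, I would factor out the singular prefactor $C(z;v\Lambda,v)^{-1}$ (recall $T(z,v) = -C(v\Lambda,v)^{-1}[A(v\Lambda,v)+B(v\Lambda,v)R(v\Lambda,0)]$, and $\tStone$ is precisely a root of $C(z;\Lambda,1)=0$), and show the remaining factor is analytic and nonzero on a region like $\mathcal D^t = \{|z|\leq 0.405\}\times\{|v-1|<0.01\}$. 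Condition~(i) then follows: $v=1$ gives a simple isolated pole at $\tStone\approx0.403$. Condition~(ii) is a routine derivative check on the residue, and Condition~(iii) follows by implicit differentiation of $C(z;v\Lambda,v)=0$ to confirm the pole location moves analytically (and with nonzero derivative) in $v$ near $v=1$. For the loop case the argument is parallel: the endpoint-on-surface walks are tracked by $u$ in $R(z;u,0;1)$, the dominant singularity at $u=1$ is again a simple pole at $\tSlone\approx0.412$ coming from the factor $(1-(1-z-z^2+\Lambda)\lambda(u\Lambda))^{-1}$, which I would factor out and show the remainder is analytic on $\mathcal D^\ell = \{|z|\leq0.413\}\times\{|u-1|<0.001\}$, with Conditions~(ii) and~(iii) handled as before. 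Theorem~IX.9 then gives $\langle h_n\rangle=\Theta(n)$ for loops directly (height is tracked by $u$ here), and $\langle e_n\rangle=\Theta(n)$ for tails; combining the latter with $\langle h_n\rangle\geq\langle e_n\rangle$ gives $\langle h_n\rangle=\Theta(n)$ for tails too.

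The main obstacle will be the analyticity claim inside Condition~(i) — specifically, showing that the infinite sum~\eqref{eqn:Ru0_soln} defining $R(u,0)$ is absolutely and uniformly convergent on the bidiscs $\mathcal D^t$ (evaluated at $u=1$ and $u=v\Lambda$) and $\mathcal D^\ell$. This is delicate because, as discussed in the text, the summands of~\eqref{eqn:Ru0_soln} carry factors of $\mathcal J(a)^{-n}$, and $\tSttwo(a)$ is a zero of $\mathcal J$; at $a=1$ one must check this does not obstruct convergence in the relevant $z$-range. Concretely, I would derive crude sup-norm bounds on $\mathcal H(q)$ and $\mathcal I(q)$ over $\mathcal D^t$ (and $\mathcal D^\ell$) — noting $\lambda(q\Lambda)$ and $\Lambda$ are small power series in $z$ there, so $\mathcal I(q) = O(z^4)$ uniformly — and show $|\mathcal I(u\Lambda^{2k})|$ is bounded by a constant strictly less than $1$ (in fact geometrically small in $k$, since $\Lambda^{2k}\to0$), so that the sum is dominated by a convergent geometric series with $\mathcal H$ bounded. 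Because we work at the specific value $a=1$ and only need a neighbourhood of a desorbed-phase singularity (well away from the $\mathcal J=0$ locus), these bounds are expected to be elementary, if somewhat tedious — which is exactly why we state the lemma without proof rather than grinding through the estimates here.
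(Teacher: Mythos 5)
Your proposal follows essentially the same route as the paper: both verify the three conditions of Theorem~IX.9 of~\cite{Flajolet2009Analytic} for $W(z;1,v;1)$ on $\mathcal D^t$ and for $R(z;u,0;1)$ on $\mathcal D^\ell$ by factoring out the respective singular terms, identify the absolute-uniform convergence of~\eqref{eqn:Ru0_soln} as the only delicate point, and pass from $\langle e_n\rangle=\Theta(n)$ to $\langle h_n\rangle=\Theta(n)$ for tails via $\langle h_n\rangle\geq\langle e_n\rangle$. The extra detail you give on bounding $\mathcal H$ and $\mathcal I$ is consistent with the sketch the paper provides.
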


One way of understanding the above lemma is to note that prudent walks, although seemingly undirected, feel the effect of a ``pseudo-force''. The step set of prudent walks is unrestricted in the sense that the walk is able to step in all four lattice directions, while maintaining the prudent condition. However, a 2-sided prudent walk is unable to take the pairs of steps south-west or west-south. This has the effect of biasing the step set in a direction at an angle to the surface. This bias can be interpreted as a ``force'' acting on the end of the polymer, pulling it in the direction of the pseudo-force.

The effect of a pulling force on a polymer has been considered in the case of self-avoiding walks~\cite{Guttmann2014Pulling} and partially directed walks~\cite{Orlandini2010Adsorbing, Osborn_2010}. When the component of the force perpendicular to the surface is sufficiently large, desorption by the \emph{force} occurs and the transition is first-order for both models. Furthermore, in the case of partially directed walks, it is known that there exists a critical value of the angle, below which adsorption is \emph{enhanced}, and desorption cannot be induced by the force. In this regime, \emph{thermal} desorption can still occur, and that transition is second-order.

This ``pseudo-force'' may be one mechanism which leads to the observed first-order transitions for 2-sided prudent walks.

\section{Conclusion}\label{sec:conclusion}
In this paper we have studied two related models of a polymer interacting with an impenetrable surface. The lattice walks considered, namely 2-sided prudent tails and loops, are able to step in all four directions on the lattice, unlike many previously-studied directed models like Dyck paths and partially directed walks. In order to analyze these walks we have used the generating function of 2-sided prudent walks in the upper half-plane. The solution to this generating function is obtained by applying the iterated kernel method to a certain functional equation. By studying the singularity structure of this generating function, we obtain thermodynamic quantities for prudent tails and loops.

 In both cases we locate the critical value of the interaction parameter, and observe a first-order adsorption transition. We relate the order of the phase transition in this model with those of other models like adsorbing Dyck paths and SAWs, and argue that the difference arises from a ``pseudo-force'' imposed by the restricted step set of prudent walks.

\appendix
\section*{Appendix}
\renewcommand{\thesubsection}{\Alph{subsection}}

\begin{thm*}[Theorem IX.9 of~\cite{Flajolet2009Analytic}]
Let $F(z,u)$ be a function that is bivariate analytic at $(z,u) = (0,0)$ and has non-negative coefficients. Assume that $F(z,1)$ is meromorphic in $z\leq r$ with only a simple pole at $z=\rho$ for some positive $\rho<r$. Assume also the following conditions.
\begin{itemize}
\item[(i)] \emph{Meromorphic perturbation:} there exists $\epsilon >0$ and $r>\rho$ such that in the domain $\mathcal D = \{|z|\leq r\}\times\{|u-1|<\epsilon\}$, the function $F(z,u)$ admits the representation
\[F(z,u) = \frac{B(z,u)}{C(z,u)},\]
where $B(z,u)$, $C(z,u)$ are analytic for $(z,u)\in\mathcal D$ with $B(\rho,1)\neq 0$. (Thus $\rho$ is a simple zero of $C(z,1)$.)
\item[(ii)] \emph{Non-degeneracy:} one has $\partial_z C(\rho,1)\cdot \partial_u C(\rho,1)\neq 0$, ensuring the existence of a non-constant $\rho(u)$ analytic at $u=1$, such that $C(\rho(u),u)=0$ and $\rho(1)=\rho$.
\item[(iii)] \emph{Variability:} one has
\[\mathfrak v\left(\frac{\rho(1)}{\rho(u)}\right)\neq 0,\]
where $\mathfrak v(g(u)) = g''(1) + g'(1)-g'(1)^2$.
\end{itemize}
Then the random variable $X_n$ with probability generating function
\[p_n(u) = \frac{[z^n]F(z,u)}{[z^n]F(z,1)}\]
after standardization, converges in distribution to a Gaussian variable, with a speed of convergence that is $O(n^{-1/2})$. The mean and standard deviation of $X_n$ are asymptotically linear in $n$.
\end{thm*}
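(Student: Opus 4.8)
The plan is to reduce the statement to Hwang's quasi-powers theorem (Theorem~IX.8 of~\cite{Flajolet2009Analytic}) by producing a \emph{uniform} quasi-powers approximation of the probability generating functions $p_n(u)$. The mechanism is a perturbative singularity analysis: I first promote the simple dominant pole $\rho$ of $F(z,1)$ to an analytic family $\rho(u)$ of dominant poles of $F(z,u)$, then transfer a uniform estimate of $[z^n]F(z,u)$ into the approximation $p_n(u)=A(u)g(u)^n(1+o(1))$ that the quasi-powers theorem consumes. To build $\rho(u)$ I would invoke the analytic implicit function theorem. Since $\rho$ is a simple zero of $C(\cdot,1)$, we have $\partial_z C(\rho,1)\neq0$, and condition~(ii) adds $\partial_u C(\rho,1)\neq0$; hence there is a function $\rho(u)$, analytic near $u=1$, with $C(\rho(u),u)=0$ and $\rho(1)=\rho$. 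Differentiating this identity gives $\rho'(1)=-\partial_u C(\rho,1)/\partial_z C(\rho,1)\neq0$, so $\rho(u)$ is genuinely non-constant.

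The heart of the argument is the uniform asymptotic estimate of $[z^n]F(z,u)$. Fix $\rho<r'<r$. As $C(\cdot,1)$ is analytic and non-vanishing on the circle $|z|=r'$, continuity on the compact set $\mathcal D$ lets me shrink $\epsilon$ so that $C(\cdot,u)$ stays bounded away from $0$ on $|z|=r'$ for all $|u-1|<\epsilon$; Rouch\'e's theorem then keeps the number of zeros of $C(\cdot,u)$ inside $|z|\le r'$ constant, and $\rho(u)$ is the unique one, hence the dominant (and simple) pole of $F(\cdot,u)=B/C$. Extracting the residue at $z=\rho(u)$ and bounding the leftover contour integral on $|z|=r'$ produces
\[ [z^n]F(z,u)=a(u)\,\rho(u)^{-n}\bigl(1+O(K^{-n})\bigr),\qquad a(u)=\frac{-B(\rho(u),u)}{\rho(u)\,\partial_z C(\rho(u),u)}, \]
uniformly for $|u-1|<\epsilon$, where $K=r'/\rho>1$ and $a(1)\neq0$ because $B(\rho,1)\neq0$.

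Forming the ratio then yields the quasi-powers form $p_n(u)=\dfrac{a(u)}{a(1)}\,g(u)^n\bigl(1+O(K^{-n})\bigr)$ with $g(u)=\rho(1)/\rho(u)$ and $g(1)=1$, the analytic prefactor and exponentially small error both uniform near $u=1$. I would then apply Theorem~IX.8: the variability hypothesis~(iii), which is exactly $\mathfrak v(g)=g''(1)+g'(1)-g'(1)^2\neq0$, rules out a degenerate limit, so the standardized $X_n$ converges to a Gaussian at speed $O(n^{-1/2})$. The standard cumulant extraction from a quasi-powers scheme gives mean $\mu_n\sim n\,g'(1)=-n\,\rho'(1)/\rho$, linear and non-zero by the first step, and variance $\sigma_n^2\sim n\,\mathfrak v(g)$, likewise linear and strictly positive by~(iii); these are the asserted linear-in-$n$ moment asymptotics.

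I expect the main obstacle to be the two uniformities underlying the displayed estimate: that $\rho(u)$ remains the \emph{unique} dominant pole with a fixed spectral gap $K=r'/\rho>1$ to all other singularities of $C(\cdot,u)$ in $|z|\le r'$, and that the remainder is $O(K^{-n})$ \emph{uniformly} in $u$. Both are delivered by the Rouch\'e/compactness argument above, but they are exactly the points where the meromorphic-perturbation hypothesis~(i) is indispensable, and they are what legitimise passing derivatives in $u$ through the $n\to\infty$ limit when reading off the moments. Once they are secured, conditions~(ii) and~(iii) make the non-degeneracy of the mean and variance essentially automatic.
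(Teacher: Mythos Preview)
The paper does not prove this theorem; it is quoted verbatim from Flajolet and Sedgewick's \emph{Analytic Combinatorics} in the Appendix purely for reference, with no accompanying argument. Your sketch is essentially the proof given in that source: perturb the dominant simple pole to an analytic family $\rho(u)$ via the implicit function theorem and Rouch\'e, extract a uniform residue-based estimate $[z^n]F(z,u)=a(u)\rho(u)^{-n}(1+O(K^{-n}))$, and feed the resulting quasi-powers form into Theorem~IX.8, with conditions (ii) and (iii) guaranteeing non-degenerate linear mean and variance. So your approach is correct, but there is no proof in this paper to compare it against.
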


\bibliography{pru_surface}

\end{document}